\documentclass[aps,pra,reprint,superscriptaddress]{revtex4-2}

\usepackage{graphicx}

\usepackage{amsmath}
\usepackage{amsthm}
\usepackage{amsfonts}
\usepackage{amssymb}
\usepackage{mathtools}
\usepackage{mathrsfs}
\usepackage{latexsym}
\usepackage{bm}
\usepackage{physics}
\usepackage{tikz}
\usepackage{comment}

\usetikzlibrary{arrows.meta}

\theoremstyle{plain}
\newtheorem{thm}{Theorem}

\newtheorem{lem}[thm]{Lemma}
\newtheorem{prop}[thm]{Proposition}

\theoremstyle{definition}
\newtheorem{dfn}{Definition}
\newtheorem{ex}{Example}

\theoremstyle{remark}
\newtheorem*{rmk}{Remark}

\begin{document}


\title{Intersubjectivity as a principle determining physical observables and non-classicality}

\author{Shun Umekawa}
\email{umeshun2003@g.ecc.u-tokyo.ac.jp}
\affiliation{Department of Physics, The University of Tokyo, 5-1-5 Kashiwanoha, Kashiwa, Chiba 277-8574, Japan}
 
\author{Koki Ono}
\email{ono-koki667@g.ecc.u-tokyo.ac.jp}
\affiliation{Department of Basic Science, The University of Tokyo, 3-8-1 Komaba, Meguro-ku, Tokyo 153-8902, Japan}
 
\author{Hayato Arai}
\email{h.arai6626@gmail.com\\S. U. and K. O. contributed equally to this work.}
\affiliation{Department of Basic Science, The University of Tokyo, 3-8-1 Komaba, Meguro-ku, Tokyo 153-8902, Japan}
\affiliation{(current) Communication Science Laboratories, NTT, Inc.,
3-1 Morinosato Wakamiya, Atsugi, Kanagawa, Japan}


\begin{abstract}
We identify an operational principle that singles out Projection-Valued Measures (PVMs) among general Positive Operator-Valued Measures (POVMs), bridging the modern quantum measurement theory and the traditional formulation based on projective measurements of physical observables.
We reformulate Ozawa's intersubjectivity condition, which requires inter-observer agreement of the measurement outcomes, in a quantitative manner within the framework of generalized probabilistic theories.
We prove that (i) a POVM is a PVM if and only if its every coarse-graining is intersubjective, and (ii) a system is classical if and only if intersubjectivity is preserved under any coarse-graining, establishing a complete characterization of the physical observables and the classical theory.
Furthermore, measurements with intersubjectivity are sufficiently rich for the informational tasks of state tomography and state discrimination, testifying to its operational significance in quantum and beyond information processing.
\end{abstract}

\maketitle


\paragraph*{\textbf{Introduction.---}}
The discovery of quantum theory revealed the inherently probabilistic nature of the physical world, opening the way for measurement theory and an operational understanding of physical theories.
Traditionally, quantum theory has been formulated in terms of physical observables represented by self-adjoint operators.
In this framework, the measurement of an observable is described by a projective measurement obeying the Born rule.
In contrast, modern quantum measurement theory adopts an operational perspective centered on Positive Operator-Valued Measures (POVMs), which encompass the most general class of physically allowed measurements.
While many areas of physics still rely on the traditional approach, the operational framework has grown increasingly important, particularly with the rapid development of quantum information science.

From the modern perspective, the measurement of a physical observable is regarded as a special case of POVMs, namely, Projection-Valued Measures (PVMs).
However, PVMs are defined in an algebraic way, and their operational characterization remains unclear.
The aim of this Letter is to provide an operational characterization of PVMs among POVMs.
In order to verify a genuinely operational interpretation of the results, we employ the framework of Generalized Probabilistic Theories (GPTs) \cite{Ludwig1964,Davies1970,Gudder1973,Ozawa1980,Foulis1994,Janotta2014,Lami2018,Takakura2022,Plavala2023}, which retain only the operational requirements regarding states and measurements from the axioms of quantum theory while discarding the rest.
Within GPTs, the generalization of projective measurements, and therefore also that of ``physical observables," is not yet clear, and this question remains an active topic of research \cite{Gudder1996,Gudder1998,Gudder1999,Jencova2019,Mielnik1969,Araki1980,Kleinmann2014,Chiribella2014}.
Numerous studies have also attempted to quantify the ``PVM-likeness" of a POVM \cite{Carmeli2007,Busch2009,Liu2021,Mitra2022,Buscemi2024}; however, these approaches often lack a clear operational interpretation or fail to provide a complete characterization of PVMs.

In this Letter, we provide a fully operational characterization of PVMs along with their quantification by generalizing the concept of \textit{intersubjectivity} \cite{Ozawa2025,Khrennikov2024,Candeloro2026} introduced by Ozawa to GPTs.
A quantum measurement is called intersubjective if two observers performing the measurement simultaneously are guaranteed to obtain the same outcome.
Ozawa argued that such inter-observer agreement is essential for a measurement to ascertain an observer-independent value, rather than merely generating outcomes randomly (Fig.~\ref{fig:intersubjectivity}).
While Ozawa proved that every PVM is intersubjective in quantum theory \cite{Ozawa2025}, the converse implication remains an open question.

\begin{figure}
\begin{minipage}{\columnwidth}
\centering
\begin{tikzpicture}[scale=0.4,thick]
\draw(-0.5,-1)--(0.5,-1);
\draw(-0.5,-0.6)--(0.5,-0.6);
\draw(-0.5,-0.2)--(0.5,-0.2);
\draw(-0.5,0.2)--(0.5,0.2);
\draw(-0.5,0.6)--(0.5,0.6);
\draw(-0.5,1)--(0.5,1);
\fill(0,-0.225)circle[radius=0.2];
\draw[very thick](-0.5,0)circle[radius=2];
\draw[ultra thick](-1.7,-1.6)--(-2.6,-2.8);
\draw[very thick](0.5,0)circle[radius=2];
\draw[ultra thick](1.7,-1.6)--(2.6,-2.8);
\fill[gray](-4,-1.4)circle[radius=0.8];
\fill[gray](-2.75,-3.5)arc[radius=1.25,start angle=0,end angle=180]--cycle;
\fill[gray](4,-1.4)circle[radius=0.8];
\fill[gray](5.25,-3.5)arc[radius=1.25,start angle=0,end angle=180]--cycle;
\draw(-5.7,0.5)circle[radius=1];
\draw(-5.5,-0.9)circle[radius=0.25];
\draw(-5.2,-1.4)circle[radius=0.15];
\draw(-5.7,0.5)node{\Large 3};
\draw(5.7,0.5)circle[radius=1];
\draw(5.5,-0.9)circle[radius=0.25];
\draw(5.2,-1.4)circle[radius=0.15];
\draw(5.7,0.5)node{\Large 3};
\end{tikzpicture}
\\
(a) Intersubjective measurement
\\
\vspace{1em}
\begin{tikzpicture}[scale=0.4,thick]
\draw(-6,0)--(-7,0.6)--(-6.2,1.3)--(-5.2,0.7)--cycle;
\draw(-6,0)--(-6,-0.9);
\draw(-7,0.6)--(-7,-0.3)--(-6,-0.9)--(-5.2,-0.2)--(-5.2,0.7);
\fill(-6.1,0.65)circle[x radius=0.21,y radius=0.18];
\draw(6.2,0)--(5.2,0.6)--(6,1.3)--(7,0.7)--cycle;
\draw(6.2,0)--(6.2,-0.9);
\draw(5.2,0.6)--(5.2,-0.3)--(6.2,-0.9)--(7,-0.2)--(7,0.7);
\fill(6.15,0.325)circle[x radius=0.175,y radius=0.15];
\fill(6.05,0.975)circle[x radius=0.175,y radius=0.15];
\fill(6.55,0.675)circle[x radius=0.175,y radius=0.15];
\fill(5.65,0.625)circle[x radius=0.175,y radius=0.15];
\fill[gray](-3,1.6)circle[radius=0.8];
\fill[gray](-1.75,-0.5)arc[radius=1.25,start angle=0,end angle=180]--cycle;
\fill[gray](3,1.6)circle[radius=0.8];
\fill[gray](4.25,-0.5)arc[radius=1.25,start angle=0,end angle=180]--cycle;
\draw(-4.7,3.5)circle[radius=1];
\draw(-4.5,2.1)circle[radius=0.25];
\draw(-4.2,1.6)circle[radius=0.15];
\draw(-4.7,3.5)node{\Large 1};
\draw(4.7,3.5)circle[radius=1];
\draw(4.5,2.1)circle[radius=0.25];
\draw(4.2,1.6)circle[radius=0.15];
\draw(4.7,3.5)node{\Large 4};
\end{tikzpicture}
\\
(b) Randomly generating measurement
\end{minipage}

\caption{
Conceptual diagram of intersubjectivity.
Two observers simultaneously perform the same measurement.
(a) An intersubjective measurement guarantees identical outcomes for both observers, whereas (b) a generic measurement may produce random outcomes independently.
}
\label{fig:intersubjectivity}
\end{figure}
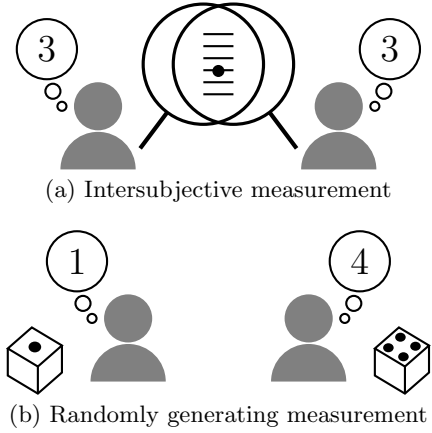

We first show that intersubjectivity alone is insufficient to characterize PVMs.
By comparing the intersubjectivity criterion with Gudder's notion of \textit{sharp effects} in GPTs \cite{Gudder1996,Gudder1998,Gudder1999} and specializing it to quantum theory, we derive a necessary and sufficient condition for a POVM to be intersubjective.
We then construct explicit examples of intersubjective POVMs that are not PVMs.
This resolves the converse direction left open in Ozawa’s analysis \cite{Ozawa2025} and reveals a genuine gap between intersubjectivity and projectivity in quantum theory.

To fill the gap between intersubjective measurements and PVMs, we introduce the notion of a \textit{completely intersubjective} measurement, namely, a measurement for which all of its coarse-grainings are intersubjective.
We show that complete intersubjectivity provides a complete characterization of PVMs in quantum theory.
Accordingly, complete intersubjectivity emerges as a natural candidate for the definition of ``physical observables" in GPTs.
Furthermore, complete intersubjectivity yields a quantification of ``PVM-likeness" that carries a direct operational interpretation.


The gap between intersubjectivity and complete intersubjectivity indicates a paradoxical situation in which the objectivity of measurement outcomes can be lost when the resolution of the measurement is reduced.
Remarkably, this counterintuitive phenomenon is common to all non-classical theories.
We prove it rigorously by constructing measurements that are intersubjective but not completely intersubjective in any non-classical theory.
Along the lines of research on GPTs that has presented characterizations of classical \cite{Barnum2007,Kuramochi2020-1,Plavala2016,Kuramochi2020-2,Aubrun2021,Aubrun2022,Shahandeh2021} and quantum \cite{Araki1980,Hardy2001,Masanes2011,Chiribella2011,Barnum2023,Arai2024} theories, this result establishes a novel operational characterization of classical theory.

Finally, to further substantiate the operational significance of (complete) intersubjectivity, we show that (completely) intersubjective measurements are sufficiently rich in any GPT in terms of standard information-processing tasks, namely, state tomography \cite{Vogel1989,Christandl2012,Banaszek2013} and state discrimination tasks \cite{Helstrom1969,Bae2015,Nuida2010,Arai2024}.
This demonstrates that (complete) intersubjectivity not only bridges the traditional observable-based formulation of quantum theory with the modern operational framework but also carries practical relevance for quantum information processing.

\paragraph*{\textbf{Generalized probabilistic theories.---}}
We briefly introduce generalized probabilistic theories (GPTs); for further details, see, \textit{e.g.}, \cite{Janotta2014,Lami2018,Takakura2022,Plavala2023}.
GPTs provide a framework based solely on operational axioms concerning states and measurements.
This framework encompasses classical and quantum theories, as well as all physically conceivable theories beyond them \cite{Popescu1994,Barrett2007,Janotta2011,Jordan1934,Sonoda2025,Yoshida2020,Plavala2022}.

In a GPT, a physical system is specified by its states and measurements.
A state is represented by an element of a compact convex set \(S\), called the state space, where the convex structure encodes probabilistic mixtures.
A extreme point of \(S\) is called a pure state, and we denote the set of all pure states by \(\mathrm{ext}(S)\).

An affine map \(a:S\to[0,1]\) is called an effect, and the set of all effects is denoted by \(\mathcal E_S\).
The state space is assumed to be separated by effects: for each \(s_1\ne s_2\in S\), there exists an effect \(a\in\mathcal E_S\) such that \(a(s_1)\ne a(s_2)\).
The partial order on \(\mathcal E_S\) is defined by \(a\le b:\Leftrightarrow(a(s)\le b(s)\text{ for all }s\in S)\). The constant maps \(0_S\) and \(1_S\), which return \(0\) and \(1\) for all states, are the least and greatest elements of \(\mathcal E_S\), respectively.

A finite-outcome measurement is given by a family \(A=(a_x)_{x\in X}\) of effects satisfying \(\sum_{x\in X}a_x=1_S\).
When a measurement \(A=(a_x)_{x\in X}\) is performed on a state \(s\in S\), the probability of obtaining the outcome \(x\in X\) is \(a_x(s)\).
We denote by \(\mathcal M_S(X)\) the set of all measurements with outcome set \(X\).
We extend several of the results for continuous-outcome measurements; see the Supplemental Material \footnote{
See Supplemental Material for further details.
}.
\newcounter{fn:SM}
\setcounter{fn:SM}{\value{footnote}}

As standard examples, the state space of a classical system is given by a simplex, while that of a quantum system is given by the space of density operators on a Hilbert space.
In quantum theory, effects correspond to positive operators bounded above by the identity operator, and measurements correspond to POVMs.

A measurement \(C=(c_{x,y})_{(x,y)\in X\times Y}\) is called a \textit{joint measurement} of a pair of measurements \(A=(a_x)_{x\in X}\) and \(B=(b_y)_{y\in Y}\) if its marginals reproduce \(A\) and \(B\), that is, \(\sum_{y\in Y}c_{x,y}=a_x\) for all \(x\in X\) and \(\sum_{x\in X}c_{x,y}=b_y\) for all \(y\in Y\).
We denote the set of all joint measurements of \(A\) and \(B\) by \(\mathrm{JM}(A,B)\).
A pair of measurements \((A,B)\) is said to be compatible \cite{Heinosaari2016} if \(\mathrm{JM}(A,B)\ne\emptyset\).

\paragraph*{\textbf{Intersubjectivity and sharpness.---}}
The concept of intersubjectivity plays a central role in this Letter.
Ozawa \cite{Ozawa2025} introduced intersubjectivity as a property of quantum observables requiring that all observers obtain the same outcome when they simultaneously measure the same observable (Fig.~\ref{fig:intersubjectivity}).
The term ``intersubjective" indicates that the measurement outcome is shared among observers, even though it is not assumed to represent a pre-existing ``objective" value independent of the measurement.
We extend this notion to GPTs and introduce a quantitative measure of intersubjectivity as follows.

\begin{dfn}
A measurement \(A=(a_x)_{x\in X}\) is called \(\alpha\)-\textit{intersubjective} if
\begin{equation}
\forall B=(b_{x,x'})_{x,x'\in X}\in\mathrm{JM}(A,A),\quad\sum_{x\in X}b_{x,x}\ge\alpha1_S.
\end{equation}
A measurement is said to be \textit{intersubjective} when it is \(1\)-intersubjective.
\end{dfn}

This definition admits a direct operational interpretation: a measurement \(A\) is \(\alpha\)-intersubjective if, whenever two observers perform it simultaneously, they agree on the outcome with a probability of at least \(\alpha\).

To analyze intersubjectivity, we also introduce a closely related concept: \textit{sharpness}.
Sharp effects were proposed by Gudder \cite{Gudder1996,Gudder1998,Gudder1999} as a generalization of projection operators in quantum theory, motivated by quantum logic \cite{Birkhoff1936,Foulis1994}.
Extending this idea, we define the sharpness of a measurement as follows.

\begin{dfn}
A measurement \(A=(a_x)_{x\in X}\) is called \(\alpha\)-\textit{sharp} if, for any \(x\ne x'\in X\),
\begin{equation}
\forall b\in\mathcal E_S,\quad(b\le a_x\text{ and }b\le a_{x'})\implies b\le(1-\alpha)1_S.
\end{equation}
A measurement is said to be \textit{sharp} when it is \(1\)-sharp.
\end{dfn}

We call an effect \(a\) sharp if the two-outcome measurement \(A:=(a,1_S-a)\) is sharp, thereby recovering Gudder's original definition \cite{Gudder1996}.
We say that a measurement \(A=(a_x)_{x\in X}\) is \textit{elementwise sharp} if \(a_x\) is sharp for all \(x\in X\).

We now show that the conditions of intersubjectivity and sharpness, although conceptually distinct at first glance, are, in fact, equivalent in any physical theory:

\begin{prop}
\label{intersubjective iff sharp}
A measurement is intersubjective if and only if it is sharp.
\end{prop}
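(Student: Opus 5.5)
Both directions can be argued directly from the definitions, using only that effects are affine maps \(S\to[0,1]\) ordered pointwise and that joint measurements are families of effects with prescribed marginals.

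\emph{Sharp \(\Rightarrow\) intersubjective.} Assume \(A=(a_x)_{x\in X}\) is sharp and take any \(B=(b_{x,x'})\in\mathrm{JM}(A,A)\). For \(x\ne x'\), the marginal conditions give
\[
b_{x,x'}\le\sum_{y}b_{x,y}=a_x
\qquad\text{and}\qquad
b_{x,x'}\le\sum_{y}b_{y,x'}=a_{x'},
\]
because every \(b_{y,z}\ge0\). Sharpness (with \(\alpha=1\)) then forces \(b_{x,x'}\le 0\cdot1_S\), so \(b_{x,x'}=0_S\). Since \(\sum_{x,x'}b_{x,x'}=1_S\), it follows that \(\sum_x b_{x,x}=1_S\). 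Hence \(A\) is \(1\)-intersubjective.

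\emph{Intersubjective \(\Rightarrow\) sharp.} I argue by contraposition. Suppose there are \(x_0\ne x_1\) and an effect \(b\neq0_S\) with \(b\le a_{x_0}\) and \(b\le a_{x_1}\). The goal is a joint measurement in \(\mathrm{JM}(A,A)\) that puts nonzero weight off the diagonal. Start from the diagonal joint measurement \(b_{x,x'}=\delta_{x,x'}a_x\), which lies in \(\mathrm{JM}(A,A)\), and perturb it by \(b\) on the \(2\times2\) block indexed by \(\{x_0,x_1\}\):
\[
b_{x_0,x_1}=b_{x_1,x_0}=b,\qquad
b_{x_0,x_0}=a_{x_0}-b,\qquad
b_{x_1,x_1}=a_{x_1}-b,
\]
leaving all other entries as in the diagonal measurement.

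This perturbation is the key construction, and the main check is that it remains a valid measurement.
\begin{itemize}
\item \emph{Effects.} Each \(a_{x_i}-b\) is affine and takes values in \([0,1]\), precisely because \(b\le a_{x_i}\).
\item \emph{Marginals.} The row sums are unchanged: row \(x_0\) is \((a_{x_0}-b)+b=a_{x_0}\), and likewise for row \(x_1\). By symmetry the column sums are unchanged too.
\item \emph{Total.} The entries still sum to \(1_S\).
\end{itemize}
So this family lies in \(\mathrm{JM}(A,A)\).

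Its diagonal sum is \(1_S-2b\). Because \(b\neq0_S\), there is a state \(s\) with \(b(s)>0\), and then \(\sum_x b_{x,x}(s)<1\). This contradicts intersubjectivity, so an intersubjective \(A\) must be sharp.

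The result holds in any GPT because the argument uses only the affine and order structure of effects and the marginal conditions. Any effect \(b\le(1-\alpha)1_S\) with \(\alpha=1\) is exactly \(0_S\).
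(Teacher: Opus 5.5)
Your proof is correct and matches the paper's argument. The paper proves the quantitative Lemma~\ref{quantitative relation between intersubjectivity and shapness}: sharpness bounds every off-diagonal entry of a joint measurement, and the same $2\times2$ block perturbation by $c$ yields the diagonal sum $1_S-2c$. It then specializes to $\alpha=1$, which is exactly what you do directly.
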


Moreover, this equivalence admits a quantitative refinement, which is presented in the End Matter.

\paragraph*{\textbf{Characterization of PVMs.---}}
Although Ozawa \cite{Ozawa2025} proved that all PVMs are intersubjective and that not all POVMs are intersubjective, it remained an open question whether there exists an intersubjective POVM other than PVMs.
By applying Proposition~\ref{intersubjective iff sharp} to quantum theory, we find the following theorem.

\begin{thm}
\label{equivalent condition to be intersubjective}
A POVM \(A=(a_x)_{x\in X}\) is intersubjective if and only if, for any \(x\ne x'\in X\),
\begin{equation}
\mathrm{supp}(a_x)\cap\mathrm{supp}(a_{x'})=\{0\}.
\end{equation}
\end{thm}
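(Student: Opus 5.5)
By Proposition~\ref{intersubjective iff sharp}, it suffices to show that a POVM \(A=(a_x)_{x\in X}\) is sharp if and only if \(\mathrm{supp}(a_x)\cap\mathrm{supp}(a_{x'})=\{0\}\) for all \(x\ne x'\). Here \(\mathrm{supp}(a)\) is the range of \(a\), the orthogonal complement of \(\ker a\). In quantum theory the effects are the operators \(0\le b\le I\), so sharpness reads: for \(x\ne x'\), every \(b\ge0\) with \(b\le a_x\) and \(b\le a_{x'}\) must vanish. (The condition \(b\le 0\cdot I\) together with \(b\ge 0\) forces \(b=0\).) The plan is to prove that this holds exactly when the supports meet trivially, working in finite dimensions first.

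\emph{Key lemma.} For a positive operator \(a\) and a unit vector \(v\), there is \(\epsilon>0\) with \(\epsilon\,|v\rangle\langle v|\le a\) if and only if \(v\in\mathrm{supp}(a)\). In finite dimensions \(\mathrm{supp}(a)=\mathrm{ran}(a)=\mathrm{ran}(a^{1/2})\), and the lemma follows from the standard criterion (Douglas' lemma) that \(\epsilon|v\rangle\langle v|\le a\) iff \(v\in\mathrm{ran}(a^{1/2})\). More concretely, \(a\) restricted to its support is invertible with smallest eigenvalue \(\lambda_{\min}>0\), so \(\lambda_{\min}|v\rangle\langle v|\le a\) for every unit \(v\) in the support. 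Conversely, if \(\epsilon|v\rangle\langle v|\le a\), then for \(w\in\ker a\) we get \(\epsilon|\langle v,w\rangle|^2\le\langle w,aw\rangle=0\), so \(v\perp\ker a\).

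\emph{First direction (trivial intersection implies sharp).} Suppose \(0\le b\le a_x\) and \(b\le a_{x'}\). For \(w\in\ker a_x\) we have \(\langle w,bw\rangle\le 0\), so \(b^{1/2}w=0\) and \(bw=0\). Hence \(\ker a_x\subseteq\ker b\), that is, \(\mathrm{supp}(b)\subseteq\mathrm{supp}(a_x)\). The same argument gives \(\mathrm{supp}(b)\subseteq\mathrm{supp}(a_{x'})\). Therefore \(\mathrm{supp}(b)\subseteq\mathrm{supp}(a_x)\cap\mathrm{supp}(a_{x'})=\{0\}\), so \(b=0\) and \(A\) is sharp.

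\emph{Second direction (sharp implies trivial intersection).} Suppose a unit vector \(v\) lies in both supports. By the lemma there are \(\epsilon_x,\epsilon_{x'}>0\) with \(\epsilon_x|v\rangle\langle v|\le a_x\) and \(\epsilon_{x'}|v\rangle\langle v|\le a_{x'}\). Setting \(\epsilon=\min(\epsilon_x,\epsilon_{x'})\), the operator \(b=\epsilon|v\rangle\langle v|\) is a nonzero effect (since \(\epsilon\le1\)) lying below both \(a_x\) and \(a_{x'}\). This contradicts sharpness.

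\emph{Main obstacle.} The delicate step is the lemma in infinite dimensions. There \(\mathrm{ran}(a)\) need not be closed, and the natural statement uses \(\mathrm{ran}(a^{1/2})\) rather than the closed support. If \(\mathrm{supp}\) is meant as the closed subspace \(\overline{\mathrm{ran}(a)}\), the converse direction can fail, because two closures can intersect while the operator ranges do not. I would therefore state and prove the result with \(\mathrm{supp}(a):=\mathrm{ran}(a^{1/2})\), which coincides with the usual support in finite dimensions, and carry out the second direction with Douglas' lemma: \(\mathrm{ran}(a_x^{1/2})\cap\mathrm{ran}(a_{x'}^{1/2})\ne\{0\}\) yields a rank-one \(b\) below both. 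In the first direction I would replace kernels by the inclusion \(\mathrm{ran}(b^{1/2})\subseteq\mathrm{ran}(a_x^{1/2})\), which again follows from Douglas' lemma.
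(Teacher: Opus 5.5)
Your argument is correct in finite dimensions and is essentially the paper's proof. Both reduce, via Proposition~\ref{intersubjective iff sharp}, to the pairwise statement that two positive operators have no nonzero common lower bound exactly when their supports meet trivially. Both obtain one direction from $\mathrm{supp}(c)\subseteq\mathrm{supp}(a)$ whenever $0\le c\le a$, and both obtain the other by exhibiting an explicit common lower bound. The paper uses $c=\varepsilon\,(e_{a,\varepsilon}\wedge e_{b,\varepsilon})$, built from the spectral projections of $a$ and $b$ on $[\varepsilon,\infty)$, whereas you use the rank-one operator $\epsilon\ketbra{v}{v}$. Once $\varepsilon$ is below the smallest nonzero eigenvalues of $a$ and $b$, $e_{a,\varepsilon}$ is just the support projection. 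So the paper's operator is $\varepsilon$ times the projection onto $\mathrm{supp}(a)\cap\mathrm{supp}(b)$, and yours is a rank-one piece of it.

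The real divergence is in infinite dimensions. There your caution is justified, and the closed-support version claimed in the paper's Supplemental Material is actually false. The paper asserts $e_{a,\varepsilon}\wedge e_{b,\varepsilon}\to p_{\mathrm{supp}(a)}\wedge p_{\mathrm{supp}(b)}$ because each factor converges strongly, but the meet is not continuous under increasing strong limits. Indeed $\mathrm{ran}(e_{a,\varepsilon})\subseteq\mathrm{ran}(a)\subseteq\mathrm{ran}(a^{1/2})$, so all these meets vanish whenever $\mathrm{ran}(a^{1/2})\cap\mathrm{ran}(b^{1/2})=\{0\}$, whatever the closures do.

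This yields a counterexample for an actual POVM. On $L^2(\mathbb T)$, let $D$ be the Fourier multiplier with symbol $(1+n^2)^{-1}$, so that $\mathrm{ran}(D^{1/2})=H^1(\mathbb T)$ consists of continuous functions. Let $T=I-\frac12D$, and let $E$ be a measurable set such that both $E$ and its complement meet every arc in positive measure. Write $\chi_E$ for multiplication by the indicator of $E$. Then $\bigl(\frac12D,\;T^{1/2}\chi_ET^{1/2},\;T^{1/2}\chi_{E^c}T^{1/2}\bigr)$ is a POVM whose square-root ranges are $H^1(\mathbb T)$, $T^{1/2}L^2(E)$ and $T^{1/2}L^2(E^c)$. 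These intersect pairwise trivially: a continuous function vanishing a.e.\ off $E$, or a.e.\ on $E$, is zero, and $T^{-1/2}$ preserves $H^1$. By Douglas' lemma ($c\le a$ implies $\mathrm{ran}(c^{1/2})\subseteq\mathrm{ran}(a^{1/2})$) the POVM is therefore sharp, and hence intersubjective. Yet $\frac12D$ is injective, so its closed support is the whole space and meets the supports of the other two effects.

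So the correct infinite-dimensional criterion is exactly the one you propose, $\mathrm{ran}(a_x^{1/2})\cap\mathrm{ran}(a_{x'}^{1/2})=\{0\}$, and your Douglas-lemma argument proves it. It coincides with the statement as written only in finite dimensions.
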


(This holds even in the infinite-dimensional case; see the Supplemental Material \footnotemark[\value{fn:SM}].)
Applying Theorem \ref{equivalent condition to be intersubjective}, we can find examples of intersubjective POVMs that are not PVMs.
For instance, in a qubit system, the POVM \(A=\bigl(\frac12\ketbra0{0},\frac12\ketbra1{1},\frac12\ketbra+{+},\frac12\ketbra-{-}\bigr)\) satisfies the condition of Theorem \ref{equivalent condition to be intersubjective}.
This shows that intersubjectivity alone does not completely characterize PVMs.

To fill this gap, we introduce the notion of \textit{completely intersubjective} measurements by focusing on the coarse-graining of measurements.
A measurement \(B=(b_y)_{y\in Y}\) is called a \textit{coarse-graining} of a measurement \(A=(a_x)_{x\in X}\) if there exists a map \(\pi:X\to Y\) such that \(\sum_{x\in\pi^{-1}(\{y\})}a_x=b_y\) for all \(y\in Y\).

\begin{dfn}
A measurement is called \textit{completely (\(\alpha\)-)intersubjective} if every coarse-graining of the measurement is (\(\alpha\)-)intersubjective.
\end{dfn}

In quantum theory, an effect is sharp if and only if it is represented by a projection operator.
Using this fact, we obtain the following characterization.

\begin{thm}
\label{completely intersubjective iff PVM}
A POVM is a PVM if and only if it is completely intersubjective.
\end{thm}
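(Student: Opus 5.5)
We will prove both directions by reducing complete intersubjectivity, via Proposition~\ref{intersubjective iff sharp}, to sharpness of every coarse-graining, and then using the fact that an effect in quantum theory is sharp if and only if it is a projection.

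\emph{PVM $\Rightarrow$ completely intersubjective.} Let $A=(a_x)_{x\in X}$ be a PVM and $B=(b_y)_{y\in Y}$ a coarse-graining with map $\pi$. Each $b_y=\sum_{x\in\pi^{-1}(\{y\})}a_x$ is a sum of mutually orthogonal projections, so it is again a projection. The $b_y$ are mutually orthogonal, so $B$ is itself a PVM.

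By Ozawa's result, or directly by Theorem~\ref{equivalent condition to be intersubjective}, $B$ is intersubjective. The direct route works because $\mathrm{supp}(b_y)=\mathrm{ran}(b_y)$, and orthogonal ranges meet only in $\{0\}$. Hence every coarse-graining of $A$ is intersubjective.

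\emph{Completely intersubjective $\Rightarrow$ PVM.} Fix $x\in X$ and consider the two-outcome coarse-graining $(a_x,1-a_x)$, obtained from the map $\pi$ sending $x\mapsto 1$ and all other outcomes to $2$. (If $|X|=1$, then $a_x=1$ and there is nothing to show.)

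By assumption this coarse-graining is intersubjective. Proposition~\ref{intersubjective iff sharp} then says it is sharp, so $a_x$ is a sharp effect, hence a projection.

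This shows $A$ is a POVM consisting of projections summing to $1$. The remaining step is the standard fact that such projections are mutually orthogonal. For $x\ne x'$ we have $a_x+a_{x'}\le 1$. Sandwiching by $a_x$ gives $a_x+a_xa_{x'}a_x\le a_x$, so $a_xa_{x'}a_x=0$. This means $(a_{x'}a_x)^*(a_{x'}a_x)=0$, and therefore $a_{x'}a_x=0$. So $A$ is a PVM.

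\emph{Where the difficulty lies.} The main point requiring care is the fact that quantum sharp effects are exactly projections, which the paper states before the theorem. If we must justify it ourselves, we argue as follows. Let $a$ be sharp and suppose it has a spectral projection $p$ onto eigenvalues in $[\epsilon,1-\epsilon]$ with $p\neq0$. Then $\epsilon p\le a$ and $\epsilon p\le 1-a$. Sharpness forces $\epsilon p\le 0$, a contradiction, so the spectrum of $a$ lies in $\{0,1\}$. Alternatively, one can use Theorem~\ref{equivalent condition to be intersubjective} applied to $(a,1-a)$: the condition $\mathrm{supp}(a)\cap\mathrm{supp}(1-a)=\{0\}$ forbids any eigenvalue strictly between $0$ and $1$. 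Everything else is routine.
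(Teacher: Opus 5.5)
Your proof is correct and follows essentially the paper's route: complete intersubjectivity is reduced, via Proposition~\ref{intersubjective iff sharp} applied to two-outcome coarse-grainings, to sharpness of the effects, and quantum sharp effects are exactly projections. The differences are cosmetic. For PVM $\Rightarrow$ completely intersubjective you invoke Theorem~\ref{equivalent condition to be intersubjective}, whereas the paper notes that every coarse-graining of a PVM consists of sharp effects and is therefore sharp. You also write out the orthogonality of projections summing to the identity and the fact that sharp effects are projections, both of which the paper takes as known.
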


This theorem establishes that complete intersubjectivity provides an operational characterization of PVMs, which represent physical observables in quantum theory.
Complete intersubjectivity requires that measurement outcomes be shared by all observers, independently of the measurement resolution; such observer-independent objectivity is a basic property expected of any ``physical observable."
Consequently, in GPTs where PVMs are not defined, completely intersubjective measurements naturally arise as a reasonable candidate for the proper definition of ``physical observables."

Furthermore, the quantification of complete intersubjectivity satisfies properties desirable for a measure of ``observable-likeness" of measurements: it admits a direct operational interpretation and completely characterizes PVMs in quantum theory.

\paragraph*{\textbf{Quantitative properties of intersubjectivity.---}}
Next, we discuss the basic properties of the degree of intersubjectivity.
First, we describe its behavior under joint measurements and probabilistic mixtures:
\begin{itemize}
\item [(i)] 
If measurements \(A\) and \(B\) are \(\alpha\)- and \(\beta\)-intersubjective, respectively, then any joint measurement \(C\in\mathrm{JM}(A,B)\) is \((\alpha+\beta-1)\)-intersubjective.
\item [(ii)]
Consider a probabilistic mixture \(C=\lambda A+(1-\lambda)B\) with \(\lambda\in(0,1]\), where \(A=(a_x)_{x\in X}\) and \(B=(b_x)_{x\in X}\).
If \(A\) is not (completely) \(\alpha\)-intersubjective, then \(C\) is not (completely) \((1-\lambda(1-\alpha))\)-intersubjective.
\end{itemize}

Here, the probabilistic mixture of two measurements \(A=(a_x)_{x\in X}\) and \(B=(b_x)_{x\in X}\) is defined by \(\lambda A+(1-\lambda)B:=(\lambda a_x+(1-\lambda)b_x)_{x\in X}\) for \(\lambda\in[0,1]\).
The proof is given in the Supplemental Material \footnotemark[\value{fn:SM}].

As a special case of statement (i), a joint measurement of intersubjective measurements must be intersubjective.
However, a joint measurement of two completely intersubjective measurements is not always completely intersubjective in general, even though this holds in classical and quantum theories \cite{Davies1976,Heinosaari2008,Haapasalo2014}.
An explicit counterexample is presented in Example~\ref{ES and Ext but not CIS} in the End Matter.

Statement (ii) implies that adding noise does not improve the intersubjectivity of a measurement.
This suggests that the quantification of (complete) intersubjectivity can also be interpreted as quantitative indicators of ``noiselessness" of measurements.
Importantly, this interpretation aligns with the historical perspective: PVMs have traditionally been regarded as perfectly accurate, while POVMs are generalized measurements that may incorporate noise.
Accordingly, the quantification of ``PVM-likeness" has also been discussed in the context of measuring noise \cite{Carmeli2007,Busch2009,Liu2021,Mitra2022,Buscemi2024}.


These observations demonstrate the importance of the degree of (complete) intersubjectivity.
However, evaluating this quantity for a given measurement is generally nontrivial.
We now present results for several simple cases.
A coin-tossing measurement \(A=(\lambda_x1_S)_{x\in X}\) is (completely) \(\alpha\)-intersubjective if and only if
\begin{equation}
\alpha\le\max\Bigl\{2\max_{x\in X}\lambda_x-1,0\Bigr\}.
\end{equation}
In a classical system \(S\), a measurement \(A=(a_x)_{x\in X}\) is (completely) \(\alpha\)-intersubjective if and only if
\begin{equation}
\alpha\le\inf_{s\in\mathrm{ext}(S)}\max\Bigl\{2\max_{x\in X}a_x(s)-1,0\Bigr\}.
\end{equation}
In a qubit system, an unbiased two-outcome measurement \cite{Busch2009} \(A=\bigl(\frac12(I+\bm\lambda\cdot\bm\sigma),\frac12(I-\bm\lambda\cdot\bm\sigma)\bigr)\), where \(\bm\sigma:=(\sigma_1,\sigma_2,\sigma_3)\) denotes the Pauli matrices, is (completely) \(\alpha\)-intersubjective if and only if
\begin{equation}
\alpha\le\abs{\bm\lambda}^2.
\end{equation}
The proofs of these results are provided in the Supplemental Material \footnotemark[\value{fn:SM}].

\paragraph*{\textbf{Relationships among conditions.---}}
We now explore the relationships among similar conditions for measurements: intersubjectivity, complete intersubjectivity, elementwise sharpness, and extremality in the space of measurements.
In classical theory, all of these conditions are equivalent, whereas in GPTs, they can diverge (Fig.~\ref{fig:relationships among the conditions}).

\begin{figure*}
\centering
\begin{minipage}{5.5cm}
\centering
\begin{tikzpicture}
\draw[very thick] (0,0)rectangle(5,4);
\fill[yellow,opacity=0.2,rounded corners] (0.5,3.5)rectangle(4.5,0.4);
\fill[magenta,opacity=0.08,rounded corners] (2,2.1)rectangle(4.1,0.8);
\fill[white,opacity=0.2,rounded corners] (1.5,2.5)rectangle(2.6,1.6);
\fill[cyan,opacity=0.2,rounded corners] (1.5,2.5)rectangle(2.6,1.6);
\draw[rounded corners,semithick] (2.25,3.5)--(0.5,3.5)--(0.5,0.4)--(4.5,0.4)--(4.5,3.5)--(2.75,3.5);
\draw (2.5,3.47) node{\textsf{IS}};
\draw[rounded corners,semithick] (3.2,2.1)--(2,2.1)--(2,0.8)--(4.1,0.8)--(4.1,2.1)--(3.9,2.1);
\draw (3.55,2.07) node{\textsf{Ext}};
\draw[rounded corners,semithick] (1.7,3)--(1,3)--(1,1.2)--(3,1.2)--(3,3)--(2.3,3);
\draw (2,2.97) node{\textsf{ES}};
\draw[rounded corners,semithick] (1.7,2.5)--(1.5,2.5)--(1.5,1.6)--(2.6,1.6)--(2.6,2.5)--(2.4,2.5);
\draw (2.05,2.47) node{\textsf{CIS}};
\end{tikzpicture}\\
(a) GPTs
\end{minipage}
\begin{minipage}{5.5cm}
\centering
\begin{tikzpicture}
\draw[very thick] (0,0)rectangle(5,4);
\fill[yellow,opacity=0.2,rounded corners] (0.5,3.5)rectangle(4.5,0.4);
\fill[magenta,opacity=0.08,rounded corners] (1.0,2.9)rectangle(4.0,0.9);
\fill[white,opacity=0.2,rounded corners] (1.5,2.3)rectangle(3.5,1.4);
\fill[cyan,opacity=0.2,rounded corners] (1.5,2.3)rectangle(3.5,1.4);
\draw[rounded corners,semithick] (2.25,3.5)--(0.5,3.5)--(0.5,0.4)--(4.5,0.4)--(4.5,3.5)--(2.75,3.5);
\draw (2.5,3.47) node{\textsf{IS}};
\draw[rounded corners,semithick] (2.15,2.9)--(1.0,2.9)--(1.0,0.9)--(4.0,0.9)--(4.0,2.9)--(2.85,2.9);
\draw (2.5,2.87) node{\textsf{Ext}};
\draw[rounded corners,semithick] (1.75,2.3)--(1.5,2.3)--(1.5,1.4)--(3.5,1.4)--(3.5,2.3)--(3.25,2.3);
\draw (2.5,2.27) node{\textsf{ES\,=\,CIS}};
\end{tikzpicture}\\
(b) Quantum theory
\end{minipage}
\begin{minipage}{5.5cm}
\centering
\begin{tikzpicture}
\draw[very thick] (0,0)rectangle(5,4);
\fill[yellow,opacity=0.2,rounded corners] (0.5,2.9)rectangle(4.5,0.9);
\fill[magenta,opacity=0.08,rounded corners] (0.5,2.9)rectangle(4.5,0.9);
\fill[white,opacity=0.2,rounded corners] (0.5,2.9)rectangle(4.5,0.9);
\fill[cyan,opacity=0.2,rounded corners] (0.5,2.9)rectangle(4.5,0.9);
\draw[rounded corners,semithick] (1.0,2.9)--(0.5,2.9)--(0.5,0.9)--(4.5,0.9)--(4.5,2.9)--(4.0,2.9);
\draw (2.5,2.87) node{\textsf{IS\,=\,Ext\,=\,ES\,=\,CIS}};
\end{tikzpicture}\\
(c) Classical theory
\end{minipage}
\caption{Inclusion relation among the set of intersubjective (IS), completely intersubjective (CIS), and  elementwise sharp (ES) measurements and the extreme points of the space of measurements (Ext) in (a) GPTs, (b) quantum theory and (c) classical theory.}
\label{fig:relationships among the conditions}
\end{figure*}
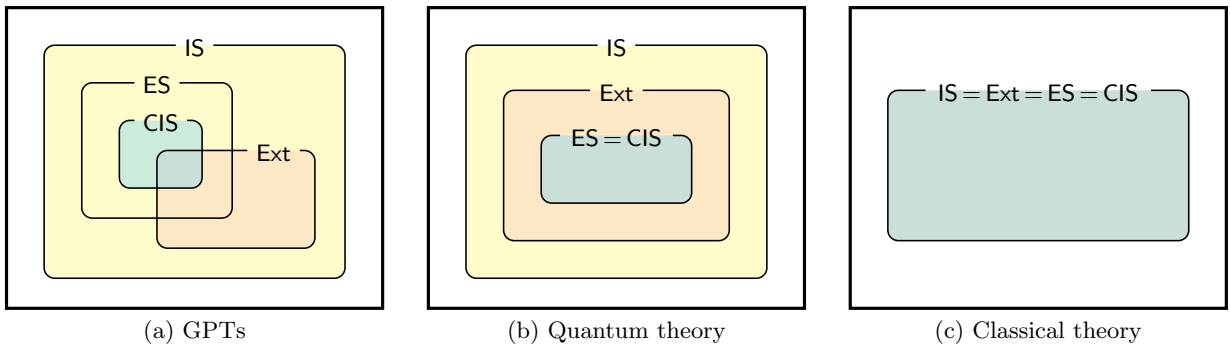

From the definitions, it is straightforward that \textit{completely intersubjective} \(\implies\) \textit{elementwise sharp} \(\implies\) \textit{intersubjective}.
Moreover, it follows from the result in \cite{Guerini2018}, which relates extremality to the uniqueness of joint measurements, that every extremal measurement \(A\in\mathrm{ext}(\mathcal M_S(X))\) is intersubjective.
This observation is significant: the Krein--Milman theorem implies that any measurement can be realized as a convex mixture of intersubjective measurements, which also indicates that intersubjective measurements can be regarded as ``noiseless" or ``accurate" measurements, while general measurements may be understood as their error-tolerant generalizations.

In quantum theory, in addition to the above general relations, the implications \textit{elementwise sharp} \(\implies\) \textit{completely intersubjective} \(\implies\) \textit{extremal} hold; in particular, complete intersubjectivity and elementwise sharpness are both equivalent to being a PVM.

Apart from the inclusion relations stated above, no other relations hold in general.
For example, in quantum theory there exist (i) extremal POVMs that are not PVMs, and (ii) intersubjective POVMs that are not extremal.
In more general systems, there exist measurements that are (iii) completely intersubjective but not extremal, (iv) elementwise sharp and extremal but not completely intersubjective, and (v) elementwise sharp but neither extremal nor completely intersubjective.
Explicit counterexamples are provided in the End Matter.

\paragraph*{\textbf{Characterization of classical theory.---}}
One of the prominent results from studies of GPTs is that certain features once regarded as specific mysteries of quantum theory, such as the existence of incompatible measurements \cite{Plavala2016, Kuramochi2020-2} and entanglement \cite{Aubrun2021,Aubrun2022}, are, in fact, common to all non-classical theories.
Building on this perspective, we show that the gap between intersubjectivity and complete intersubjectivity provides a novel operational characterization of non-classical theory.

\begin{thm}
\label{non-classical iff IS but not CIS exist}
A finite-dimensional system is classical if and only if all intersubjective measurements on it are completely intersubjective.
\end{thm}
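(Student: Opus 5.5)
The plan is to prove the two directions separately, using Proposition~\ref{intersubjective iff sharp} to replace intersubjectivity by sharpness throughout.

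\emph{Classical $\Rightarrow$ every intersubjective measurement is completely intersubjective.} Let $S$ be a simplex with vertices $v_1,\dots,v_n$. Effects are then exactly the functions $\{v_i\}\to[0,1]$, extended affinely, and the order is pointwise on vertices. In particular, the pointwise minimum $\min(a_x,a_{x'})$ is itself an effect below both $a_x$ and $a_{x'}$. Sharpness of $A=(a_x)_x$ therefore forces $\min(a_x,a_{x'})\le 0$, i.e.\ the vertex supports of distinct elements are disjoint. Since $\sum_x a_x(v_i)=1$, at each vertex exactly one $a_x$ equals $1$, so every $a_x$ is an indicator function with pairwise disjoint supports. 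Any coarse-graining is then again a family of indicators with disjoint supports, hence sharp, hence intersubjective by Proposition~\ref{intersubjective iff sharp}.

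\emph{Non-classical $\Rightarrow$ there is an intersubjective measurement that is not completely intersubjective.} I would use the fact quoted before the theorem that extremal measurements are intersubjective (via \cite{Guerini2018}). So the goal is an extremal measurement $A$ with a coarse-graining that is not sharp.

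Construction of $A$: work in the finite-dimensional effect cone $E_+$ dual to the cone over $S$, with order unit $1_S$ interior.
\begin{itemize}
\item By conic Carathéodory, write $1_S=\sum_{i=1}^m e_i$, where the $e_i$ lie on distinct extreme rays of $E_+$ and are linearly independent.
\item Such an $A=(e_i)_i$ is extremal. If $A=\tfrac12(B+C)$, then $b_i\le 2e_i$ forces $b_i=\mu_i e_i$ because $e_i$ spans an extreme ray. The condition $\sum_i b_i=1_S$ together with linear independence then gives $\mu_i=1$.
\end{itemize}

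It remains to find a two-outcome coarse-graining $(c,1_S-c)$, with $c=\sum_{i\in I}e_i$, for which some nonzero effect $b$ satisfies $b\le c$ and $b\le 1_S-c$. I would argue by contradiction. Suppose every such $c$ is sharp. Then the $2^m$ effects $\sum_{i\in I}e_i$ should behave like a Boolean algebra of indicator effects. The target is to show this makes $E_+$ simplicial, generated by $e_1,\dots,e_m$ with $m=\dim$, and hence $S$ a simplex.

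Concretely: since $S$ is not a simplex, $E_+$ has an extreme ray $f$ not among the $e_i$. Expand $f=\sum_i \phi_i e_i$ (possible when $m$ is full dimension; otherwise first enlarge the decomposition). Some coefficients are positive and, since $f$ is not a positive combination of the $e_i$, some are negative. Take $I=\{i:\phi_i>0\}$. Then I want a small multiple of $f$, or of $f$ minus an appropriate combination, to lie below both $c=\sum_{i\in I}e_i$ and $1_S-c$, exhibiting non-sharpness.

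\emph{Main obstacle.} This last step is where I expect the real difficulty. The linearly independent decomposition has to be chosen, or adjusted, so that a genuinely common lower bound exists. If the direct choice fails, my fallback is to pick the decomposition adapted to a face of the state space that is not a simplex: a minimal non-simplicial face, e.g.\ a polygon or disk section. There I would build the explicit analogue of the qubit example $\bigl(\tfrac12|0\rangle\langle0|,\dots\bigr)$ by hand and lift it to $S$ via an extension of effects from the face.
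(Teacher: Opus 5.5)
\textbf{There is a genuine gap in the non-classical direction.} Your classical direction is correct, and it is more elementary than the paper's route through the explicit formula for classical systems. In the non-classical direction, however, the step that produces a non-sharp coarse-graining is never carried out. Worse, the goal you set for it, an \emph{extremal} measurement with a non-sharp coarse-graining, cannot be reached in general.

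Take the square model of Example~\ref{CIS but not Ext}. Its effect cone has the four extreme rays $x_3\pm x_1$ and $x_3\pm x_2$. A measurement is extremal iff the spans of the minimal faces of its nonzero elements form a direct sum. Now $1_S=x_3$ cannot be written as follows:
\begin{itemize}
\item as a positive combination of three distinct extreme effects, since one coefficient is always forced to vanish;
\item as an element in the relative interior of a two-dimensional face plus a ray.
\end{itemize}
So, up to zero outcomes, the only nontrivial extremal measurements are $\bigl(\tfrac12(x_3+x_1),\tfrac12(x_3-x_1)\bigr)$ and $\bigl(\tfrac12(x_3+x_2),\tfrac12(x_3-x_2)\bigr)$. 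Both are sharp, and all their coarse-grainings are sharp. Hence in this non-classical system every extremal measurement is completely intersubjective. Your decomposition there has $m=2<3$ and cannot be enlarged.

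When $m$ does equal the dimension, your idea can be finished. If every coarse-graining of $(e_i)$ were sharp, Gudder's states $s_i$ with $e_i(s_i)=1$ would give $e_j(s_i)=\delta_{ij}$, forcing $S$ to be a simplex. But that case need not occur. Your fallback, lifting a qubit-like example from a minimal non-simplicial face, is not yet an argument. Extending effects from a face to $S$ while keeping the sum equal to $1_S$ and preserving sharpness is exactly the unresolved difficulty.

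\textbf{The missing idea} is to give up extremality and build a non-extremal sharp measurement directly from indecomposable effects, as the paper does.
\begin{itemize}
\item In a non-classical system there are distinct unit-norm indecomposable effects $a,b$ with $a+b\not\le 1_S$. Otherwise, states $s_i$ with $a_i(s_i)=1$ would give $a_i(s_j)=\delta_{ij}$, making all unit-norm indecomposable effects linearly independent and $S$ classical.
\item Choose $\lambda,\mu\in[0,1]$ maximizing $\lambda+\mu$ subject to $\lambda a+\mu b\le 1_S$; both may be taken positive.
\item The measurement $(\lambda a,\mu b,1_S-\lambda a-\mu b)$ is then sharp. Distinct $a,b$ share no nonzero lower bound. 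A common lower bound $ta$ of $\lambda a$ and $1_S-\lambda a-\mu b$ would allow $\lambda$ to be increased, and likewise for $\mu b$.
\item Since $(1,1)$ is infeasible, say $\lambda<1$. Then $\min\{\lambda,1-\lambda\}\,a$ is a nonzero common lower bound of $\lambda a$ and $1_S-\lambda a$, so this coarse-graining is not sharp.
\end{itemize}
In the square model this gives $\bigl(\tfrac14(x_3+x_1),\tfrac14(x_3+x_2),\tfrac14(2x_3-x_1-x_2)\bigr)$. It is intersubjective, not completely intersubjective, and not extremal.
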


For the proof, see the End Matter.

This result demonstrates that in any non-classical theory, the objectivity of measurement outcomes can depend on the resolution of the measurement and may be lost through coarse-graining.
This phenomenon can be illustrated by the following paradoxical thought experiment.
Suppose that there are three boxes, and a particle is located in one of them.
Two observers, Alice and Bob, attempt to find where the particle is.
Assume that the measurement of the location of the particle is intersubjective but not completely intersubjective.
Then the following counterintuitive situation arises: if they open all three boxes and simultaneously observe the location of the particle, the intersubjectivity ensures the consensus of their observation; if they open only one box and check whether the particle is inside that box, they may arrive at different conclusions because the measurement is not completely intersubjective.
Example \ref{Ext but not PVM} in the End Matter realizes this thought experiment within quantum theory.

\paragraph*{\textbf{Advantage in information processing.---}}
Beyond the operational insights presented so far, we now clarify the significance of (complete) intersubjectivity in specific information-processing tasks: state tomography \cite{Vogel1989,Christandl2012,Banaszek2013} and state discrimination \cite{Helstrom1969,Bae2015,Nuida2010,Arai2024}, which play a fundamental role in (quantum) information processing.

State tomography is the informational task of determining an unknown state from the statistics of measurement outcomes.
A set of measurements \(\mathcal M\) is called \textit{tomographically complete} if it separates the state space; that is, for each \(s_1\ne s_2\in S\), there exists a measurement \(A=(a_x)_{x\in X}\in\mathcal M\) and \(x\in X\) such that \(a_x(s_1)\ne a_x(s_2)\).
If \(\mathcal M\) is tomographically complete, state tomography can be performed using measurements solely from \(\mathcal M\).

State discrimination is the task of distinguishing a family of states through a single-shot measurement: given a finite ensemble of states \((s_x,p_x)_{x\in X}\), where state \(s_x\) is prepared with probability \(p_x\), the goal is to identify which state has been prepared.
One performs a measurement \(A=(a_x)_{x\in X}\) on the given state and guesses that the prepared state was \(s_x\) when outcome \(x\) is obtained.
The success probability of this task is given by \(\sum_{x\in X} p_xa_x(s_x)\).
A family of states \((s_x)_{x\in X}\) is said to be \textit{perfectly distinguished} by a measurement \(A=(a_x)_{x\in X}\) if \(a_x(s_x)=1\) holds for all \(x\in X\).

We observe the following:
\begin{itemize}
\item[(i)]
The set of all completely intersubjective measurements is tomographycally complete.
\item[(ii)]
For any given ensemble of states, there always exists an intersubjective measurement that is optimal, \textit{i.e.}, that maximizes the success probability of the discrimination task.
\item[(iii)]
Every completely intersubjective measurement admits a family of states that is perfectly distinguished by it.
\end{itemize}

Statement (i) follows from the fact that every extremal two-outcome measurement is sharp \cite{Gudder1999}, while (ii) follows from Bauer's maximum principle and the implication \textit{extremal \(\implies\) intersubjective}.
Statement (iii) follows from the implication \textit{completely intersubjective \(\implies\) elementwise sharp}, along with the fact that for any sharp effect \(a\ne0_S\), there exists a state \(s\) such that \(a(s)=1\) \cite{Gudder1999}.
We note, however, that unlike in the quantum case, a perfectly distinguishable family of states need not be perfectly distinguished by a completely intersubjective measurement, as demonstrated by Example~\ref{ES and Ext but not CIS} in the End Matter.

When defining a concept in GPTs, one must keep in mind that instances of it may arise only rarely in generic systems \cite{Kleinmann2014,Jencova2019}.
Despite this concern, the above results demonstrate the abundant existence of (completely) intersubjective measurements.

\paragraph*{\textbf{Conclusion.---}}
In this work, we have investigated the operational foundations of measurements corresponding to physical observables in quantum and generalized probabilistic theories.
We have extended Ozawa’s notion of intersubjectivity and introduced complete intersubjectivity, which requires that any coarse-graining of a measurement preserves intersubjectivity.
In quantum theory, this property completely characterizes projective measurements (PVMs), providing a natural operational criterion for a measurement to represent a physical observable.

We have quantified intersubjectivity and complete intersubjectivity, providing operational measures of ``observable-likeness" and ``noiselessness" of measurements.
Our results also reveal that the gap between intersubjectivity and complete intersubjectivity serves as a distinguishing feature of non-classical theories, providing a new operational criterion to characterize classical systems.

Finally, we have shown that sufficiently many (completely) intersubjective measurements exist for state tomography and state discrimination tasks in any GPT.
This highlights their conceptual and practical roles: not only do they ensure shared outcomes among observers, but they also provide resources for information processing, even beyond quantum theory.
These results demonstrate that intersubjectivity unifies foundational and operational perspectives on measurements.

Future work may explore several directions.
First, an intriguing question is whether quantum theory can be characterized by additional inclusion relations among the notions related to intersubjectivity.
Second, it is also natural to ask whether intersubjectivity admits a resource-theoretic formulation and how it relates to measurement incompatibility.
Finally, further developing an operational understanding of broad topics in traditional observable-based physics may not only deepen its conceptual foundations but also provide a basis for formulating and exploring diverse physical theories within the framework of GPTs.


\paragraph*{\textbf{Acknowledgments.---}}
The authors thank the Yukawa Institute for Theoretical Physics at Kyoto University, where this work was initiated during the YITP-W-25-11.
We thank Hiroatsu Okamitsu, Masanao Ozawa, Ryo Takakura, Kaito Watanabe, and Ludovico Lami for fruitful discussions.
S. U. was supported by Forefront Physics and Mathematics Program to Drive Transformation, a World-leading Innovative Graduate Study Program, the University of Tokyo.
K. O. was supported by Advanced Basic Science Course, a World-leading Innovative Graduate Study Program, the University of Tokyo.
H. A. was supported by JSPS KAKENHI Grant Number 25KJ0043 and 26K17031.

\paragraph*{\textbf{Author contributions.---}}
S. U. and K. O. contributed equally to this work.

\bibliography{intersubjectivity_ref}

\clearpage

\section*{End Matter}

\paragraph*{\textbf{Proof of Proposition \ref{intersubjective iff sharp}.---}}
Here, we present the proofs of Proposition \ref{intersubjective iff sharp} and its quantitative extension.

\begin{lem}
\label{quantitative relation between intersubjectivity and shapness}
\leavevmode
\begin{itemize}
\item[(i)]
If an \(n\)-outcome measurement is \(\alpha\)-sharp, then it is \(\bigl(1-\bigl(n^2-n\bigr)(1-\alpha)\bigr)\)-intersubjective.
\item[(ii)]
If a measurement is \(\alpha\)-intersubjective, then it is \(\frac{1+\alpha}2\)-sharp.
\end{itemize}
\end{lem}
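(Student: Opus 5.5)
For part (i), the plan is to bound each off-diagonal entry of an arbitrary joint measurement directly by the sharpness condition. Let \(A=(a_x)_{x\in X}\) be \(\alpha\)-sharp with \(|X|=n\), and let \(B=(b_{x,x'})_{x,x'\in X}\in\mathrm{JM}(A,A)\). For \(x\ne x'\), every entry \(b_{x,x''}\) is an effect, so
\begin{equation}
b_{x,x'}\le\sum_{x''\in X}b_{x,x''}=a_x,\qquad b_{x,x'}\le\sum_{x''\in X}b_{x'',x'}=a_{x'} .
\end{equation}
Hence \(b_{x,x'}\) is an effect lying below both \(a_x\) and \(a_{x'}\), and \(\alpha\)-sharpness gives \(b_{x,x'}\le(1-\alpha)1_S\). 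Since \(\sum_{x,x'}b_{x,x'}=1_S\), the diagonal satisfies
\begin{equation}
\sum_{x}b_{x,x}=1_S-\sum_{x\ne x'}b_{x,x'}\ge\bigl(1-(n^2-n)(1-\alpha)\bigr)1_S,
\end{equation}
because there are \(n^2-n\) off-diagonal pairs.

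For part (ii), the plan is to argue by construction: from any common lower bound of two outcomes we build an explicit joint measurement whose disagreement probability is twice that bound. Let \(A\) be \(\alpha\)-intersubjective, fix \(x\ne x'\), and let \(b\in\mathcal E_S\) satisfy \(b\le a_x\) and \(b\le a_{x'}\). Define \(C=(c_{y,y'})\) by
\begin{equation}
c_{x,x'}=c_{x',x}=b,\qquad c_{x,x}=a_x-b,\qquad c_{x',x'}=a_{x'}-b,
\end{equation}
together with \(c_{y,y}=a_y\) for \(y\notin\{x,x'\}\), and all other entries \(0_S\). Each entry is an affine map into \([0,1]\): nonnegativity comes from \(b\le a_x,a_{x'}\), and each entry is bounded above by some \(a_y\le1_S\). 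Every row and column marginal equals \(a_y\); for example, row \(x\) gives \((a_x-b)+b=a_x\), and column \(x\) gives \((a_x-b)+b=a_x\). So \(C\in\mathrm{JM}(A,A)\).

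Its diagonal sum is \(1_S-2b\), and \(\alpha\)-intersubjectivity requires \(1_S-2b\ge\alpha1_S\). This gives \(b\le\frac{1-\alpha}{2}1_S=\bigl(1-\frac{1+\alpha}{2}\bigr)1_S\), which is exactly \(\frac{1+\alpha}2\)-sharpness.

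Taking \(\alpha=1\) in both parts yields Proposition~\ref{intersubjective iff sharp}, since \(1-(n^2-n)\cdot0=1\). Neither step is a real obstacle. The only care needed is in (ii): one must check that the constructed \(C\) is a genuine measurement (entries are effects and sum to \(1_S\)), and that placing \(b\) in both off-diagonal slots \((x,x')\) and \((x',x)\) is what keeps the marginals symmetric and produces the factor \(2\).
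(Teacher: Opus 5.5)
Your proposal is correct and follows essentially the same argument as the paper. In (i) you bound each off-diagonal entry of a joint measurement by sharpness and subtract from \(1_S\). In (ii) you place the common lower bound in the two off-diagonal slots \((x,x')\) and \((x',x)\), which yields diagonal sum \(1_S-2b\); the paper uses exactly this joint measurement.
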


\begin{proof}
\leavevmode
\begin{itemize}
\item[(i)]
Let \(A=(a_x)_{x\in X}\) be an \(n\)-outcome measurement that is \(\alpha\)-sharp, and let \(B=(b_{x,x'})_{x,x'\in X}\in\mathrm{JM}(A,A)\).
Then we have \(b_{x,x'}\le a_x\) and \(b_{x,x'}\le a_{x'}\) for all \(x,x'\in X\).
Since \(A\) is \(\alpha\)-sharp, it follows that \(b_{x,x'}\le(1-\alpha)1_S\) for \(x\ne x'\).
Therefore, 
\begin{equation*}
\begin{aligned}
\sum_{x\in X}b_{x,x}&=1_S-\sum_{\substack{x,x'\in X\\x\ne x'}}b_{x,x'}\\
&\ge\bigl(1-(n^2-n)(1-\alpha)\bigr)1_S.
\end{aligned}
\end{equation*}
Hence, \(A\) is \(\bigl(1-\bigl(n^2-n\bigr)(1-\alpha)\bigr)\)-intersubjective.
\item[(ii)]
Let \(A=(a_x)_{x\in X}\) be a measurement that is \(\alpha\)-intersubjective.
Suppose an effect \(c\) satisfies \(c\le a_{x_1}\) and \(c\le a_{x_2}\) for some \(x_1\ne x_2\in X\).
Define \(B=(b_{x,x'})_{x,x'\in X}\in\mathrm{JM}(A,A)\) by
\begin{equation*}
b_{x,x'}:=
\begin{cases}
a_{x_1}-c&\text{if }x=x'=x_1,\\
a_{x_2}-c&\text{if }x=x'=x_2,\\
a_x&\text{if }x=x'\ne x_1,x_2,\\
c&\text{if }x=x_1\text{ and }x'=x_2,\\
c&\text{if }x=x_2\text{ and }x'=x_1,\\
0_S&\text{otherwise}.
\end{cases}
\end{equation*}
Since \(A\) is \(\alpha\)-intersubjective, we have
\begin{equation*}
c=\frac12\biggl(1_S-\sum_{x\in X}b_{x,x}\biggr)\le\frac{1-\alpha}21_S.
\end{equation*}
Hence, \(A\) is \(\frac{1+\alpha}2\)-sharp.
\qedhere
\end{itemize}
\end{proof}

Proposition \ref{intersubjective iff sharp} follows from Lemma \ref{quantitative relation between intersubjectivity and shapness}.
Moreover, for two-outcome measurements (\textit{i.e.}, \(n=2\)), the quantifications of intersubjectivity and sharpness coincide: a two-outcome measurement is \(\alpha\)-intersubjective if and only if it is \(\frac{1+\alpha}2\)-sharp.

\paragraph*{\textbf{Counterexamples.---}}
Here, we present several explicit counterexamples mentioned in the main text.

\begin{ex}
\label{Ext but not PVM}
In a qubit system, the POVM \(A:=\bigl(\frac23\ketbra0{0},\frac23\ketbra{\phi_+}{\phi_+},\frac23\ketbra{\phi_-}{\phi_-}\bigr)\), where \(\ket{\phi_\pm}:=\frac12\ket0\pm\frac{\sqrt3}2\ket1\), is an extreme point of \(\mathcal M_S(3)\) but not a PVM.
\end{ex}

\begin{ex}
\label{IS but not Ext POVM}
In a qubit system, the mixture \(A:=\frac12B+\frac12C\) of two POVMs \(B:=(\ketbra0{0},\ketbra1{1},0,0)\) and \(C:=(0,0,\ketbra+{+},\ketbra-{-})\) is intersubjective but not extremal.
\end{ex}

\begin{ex}
\label{CIS but not Ext}
Consider the square model (also known as a gbit \cite{Barrett2007}), a model that serves as the local component of the Popescu-Rohrlich box \cite{Popescu1994}, with the state space
\begin{equation*}
S:=\left\{(x_1,x_2,x_3)\in\mathbb R^3\:\middle|\:
\begin{gathered}
-x_3\le x_1\le x_3,\\
-x_3\le x_2\le x_3,\\
x_3=1
\end{gathered}
\right\}.
\end{equation*}
Let \(A:=\frac12B+\frac12C\) be the mixture of two measurements \(B:=(b_+,b_-)\) and \(C:=(c_+,c_-)\), defined by
\begin{equation*}
\begin{aligned}
b_\pm(x_1,x_2,x_3)&:=\frac12(x_3\pm x_1),\\
c_\pm(x_1,x_2,x_3)&:=\frac12(x_3\pm x_2).
\end{aligned}
\end{equation*}
Then \(A\) is completely intersubjective but not extremal; see the Supplemental Material \footnotemark[\value{fn:SM}] for the proof.
\end{ex}

\begin{ex}
\label{ES and Ext but not CIS}
Consider the system with the state space
\begin{equation*}
\begin{aligned}
&S:=\\
&\left\{(x_1,x_2,x_3,x_4,x_5)\in\mathbb R^5\:\middle|\:
\begin{gathered}
0\le x_i\ (i=1,\dots,5),\\
x_5\le x_1+x_4,\\
x_5\le x_2+x_3,\\
x_1+x_2+x_3+x_4=1
\end{gathered}
\right\}.
\end{aligned}
\end{equation*}
Define the measurement \(A:=(a_1,a_2,a_3,a_4)\) by
\begin{equation*}
a_i(x_1,x_2,x_3,x_4,x_5):=x_i\quad(i=1,2,3,4).
\end{equation*}
This measurement is elementwise sharp and extremal, but not completely intersubjective.

Moreover, \(A\) is a joint measurement of two completely intersubjective measurements \(B:=(a_1+a_2,a_3+a_4)\) and \(C:=(a_1+a_3,a_2+a_4)\), but \(A\) itself is not completely intersubjective.
We also find that the family of states \((s_1,s_2,s_3,s_4)\), defined by \(s_i:=(\delta_{1,i},\delta_{2,i},\delta_{3,i},\delta_{4,i},0)\), is perfectly distinguished only by \(A\), which is not completely intersubjective.
Proofs are provided in the Supplemental Material \footnotemark[\value{fn:SM}].
\end{ex}

\begin{ex}
\label{ES but not Ext nor CIS}
In the system given by the direct sum of the state spaces given in Examples \ref{CIS but not Ext} and \ref{ES and Ext but not CIS}, the measurement obtained as the direct sum of the measurements in Examples \ref{CIS but not Ext} and \ref{ES and Ext but not CIS} is elementwise sharp but neither extremal nor completely intersubjective.
Here, the direct sum of state spaces \(S_1\) and \(S_2\) is defined (informally) by
\begin{equation*}
S_1\oplus S_2:=\{\lambda s_1+(1-\lambda)s_2\mid\lambda\in[0,1],\ s_1\in S_1,\ s_2\in S_2\},
\end{equation*}
and for measurements \(A=(a_x)_{x\in X}\in\mathcal M_{S_1}(X)\) and \(B=(b_y)_{y\in Y}\in\mathcal M_{S_2}(Y)\), their direct sum \(A\oplus B=(c_z)_{z\in X\sqcup Y}\in\mathcal M_{S_1\oplus S_2}(X\sqcup Y)\) is defined by
\begin{equation*}
c_z(\lambda s_1+(1-\lambda)s_2):=
\begin{cases}
\lambda a_z(s_1)&\text{if }z\in X,\\
(1-\lambda)b_z(s_2)&\text{if }z\in Y.
\end{cases}
\end{equation*}
\end{ex}

\begin{rmk}
We note that the gap between intersubjectivity and complete intersubjectivity arises only for measurements with three or more outcomes, whereas the gap between elementwise sharpness and complete intersubjectivity arises only for measurements with four or more outcomes.
\end{rmk}

\paragraph*{\textbf{Proof of Theorem \ref{non-classical iff IS but not CIS exist}.---}}
In a classical system, there is no gap between intersubjectivity and complete intersubjectivity, as shown by the explicit evaluation of the degree of (complete) intersubjectivity given in the main text (see the Supplemental Material \footnotemark[\value{fn:SM}] for the proof).
Conversely, in any non-classical system, we explicitly construct an intersubjective and not completely intersubjective measurement in two different ways (Lemma \ref{3-outcome IS but CIS} and Lemma \ref{n+1-outcome IS but CIS}).

For the explicit constructions, we focus on the following concept: \textit{indecomposable effects}.
An effect \(a\ne0_S\) is said to be indecomposable \cite{Kimura2010} if
\begin{equation}
\forall b\in\mathcal E_S,\quad b\le a \implies \exists \lambda\in[0,1]\text{ s.t. }b=\lambda a.
\end{equation}
Indecomposable effects correspond to the extremal rays of the positive cone of affine functionals on the state space.
A finite-dimensional system is classical if and only if all indecomposable effects of unit norm are linearly independent (see, \textit{e.g.}, §2.2 of \cite{Aubrun2021}).
Here, the norm of an effect \(a\) is given by \(\norm{a}:=\sup_{s\in S}a(s)\).

\begin{lem}
\label{3-outcome IS but CIS}
In any finite-dimensional non-classical system, there exists a three-outcome measurement that is intersubjective but not completely intersubjective.
\end{lem}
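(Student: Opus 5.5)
The plan is to use Proposition~\ref{intersubjective iff sharp} to turn both properties into statements about common lower bounds of effects. Intersubjectivity of $A=(a_1,a_2,a_3)$ means that for every pair $x\neq x'$, any effect $b$ with $b\le a_x$ and $b\le a_{x'}$ is $0_S$. Failure of complete intersubjectivity means that some two-outcome coarse-graining, say $(a_1,\,a_2+a_3)$, admits a nonzero common lower bound. Indecomposable effects suit the first requirement: if $a_x=t_x e_x$ with $e_x$ indecomposable, every $b\le a_x$ equals $\lambda e_x$. Common lower bounds of two non-proportional indecomposable multiples are therefore automatically $0_S$. Non-classicality, in the form stated above (the unit-norm indecomposable effects are linearly dependent), is what should supply the nonzero common lower bound after coarse-graining.

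\emph{Choice of effects.} From a nontrivial linear relation among unit-norm indecomposable effects, I would move terms with negative coefficients to the other side. This gives
\begin{equation*}
\sum_{i\in I}\mu_i e_i=\sum_{j\in J}\nu_j f_j=:g\neq 0_S,
\end{equation*}
with $I\cap J=\emptyset$, all coefficients positive, and no $e_i$ proportional to any $f_j$. The aim is to use such a relation, possibly after passing to a minimal dependent family, to choose two non-proportional indecomposable effects $e_1,e_2$ and a scale $t>0$. I would then set
\begin{equation*}
a_1:=te_1,\qquad a_2:=te_2,\qquad a_3:=1_S-t(e_1+e_2).
\end{equation*}
The scale $t$ is fixed by $t\,\|e_1+e_2\|=1$, so that $a_3\ge0_S$.

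\emph{Verifying intersubjectivity.} The pair $(a_1,a_2)$ is handled by indecomposability. For the pair $(a_1,a_3)$, any common lower bound has the form $\lambda e_1$ with $\lambda e_1\le 1_S-t(e_1+e_2)$. Evaluating at a state $s$ where $t(e_1(s)+e_2(s))=1$ and $e_1(s)>0$ forces $\lambda=0$. So I need a maximizer of $e_1+e_2$ at which both $e_1$ and $e_2$ are positive. The pair $(a_2,a_3)$ is handled symmetrically.

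\emph{Verifying non-complete intersubjectivity.} I need to exhibit a nonzero common lower bound for some coarse-graining, e.g.\ $(a_1,\,a_2+a_3)$ or $(a_1+a_2,\,a_3)$. Here the relation above should be used: the same effect $g$ is dominated, up to scaling, by effects on both sides of the relation. The idea is to choose $e_1,e_2$ so that a small multiple of some $f_j$ (or of $g$ itself) lies below both parts of the coarse-grained measurement.

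\emph{Main obstacle.} The hardest step is choosing $e_1,e_2$ so that both requirements hold simultaneously: a maximizer of $e_1+e_2$ with both effects positive there, and a non-sharp coarse-graining. The square model shows this is delicate. With $e_1=b_+$ and $e_2=c_+$ from Example~\ref{CIS but not Ext}, the coarse-graining $(a_1+a_2,a_3)$ turns out to be sharp, since no nonzero nonnegative affine function vanishes at both $(1,1)$ and $(-1,-1)$. So the specific choice of indecomposable effects, and of which coarse-graining to test, must be dictated by the dependence relation.

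If the two-indecomposable ansatz fails, my fallback is to make $a_3$ itself a combination of $f_j$'s and take $a_1,a_2$ from the $e_i$ side, rescaling the relation so that $a_1+a_2+a_3=1_S$ using that $1_S$ is an order unit. Non-sharpness of the coarse-graining would then follow from the relation directly, and pairwise sharpness would be checked via the zero-sets of the chosen indecomposables.
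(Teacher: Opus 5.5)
Your ansatz has the right shape; the paper uses $(\lambda a,\mu b,1_S-\lambda a-\mu b)$ with $\lambda+\mu$ maximal subject to $\lambda a+\mu b\le 1_S$. \textbf{There is a genuine gap, however:} the proposal stops at the decisive step. You never say which pair $e_1,e_2$ to take, and you look for non-sharpness in a coarse-graining that need not fail.

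\textbf{The missing idea.} In a non-classical system there exist two distinct unit-norm indecomposable effects with $e_1+e_2\not\le 1_S$. Suppose instead that $e_i+e_j\le 1_S$ for all distinct unit-norm indecomposables. Choose $s_i$ with $e_i(s_i)=1$. Then $e_j(s_i)\le 0$, hence $e_j(s_i)=0$ for $j\neq i$, so $e_i(s_j)=\delta_{i,j}$. Evaluating any relation $\sum_i\lambda_ie_i=0_S$ at $s_j$ gives $\lambda_j=0$, so the system is classical. This is exactly how the paper uses non-classicality. A raw dependence relation $\sum_i\mu_ie_i=\sum_j\nu_jf_j$ does not by itself pick out the right pair, and your fallback is too vague to count as an argument.

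\textbf{Why this pair closes both open issues.} With such a pair, $t=1/\|e_1+e_2\|<1$, and both of your obstacles disappear.
\begin{itemize}
\item \emph{Intersubjectivity.} At any maximizer $s$ of $e_1+e_2$ you have $e_1(s)+e_2(s)>1$ with $e_1(s),e_2(s)\le 1$, so both values are strictly positive. Your own argument then makes $(a_1,a_3)$ and $(a_2,a_3)$ sharp, and the measurement is intersubjective by Proposition~\ref{intersubjective iff sharp}.
\item \emph{Non-complete intersubjectivity.} The coarse-graining to test is $(a_1,a_2+a_3)=(te_1,1_S-te_1)$, not $(a_1+a_2,a_3)$. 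The nonzero effect $\min\{t,1-t\}e_1$ lies below $te_1$, and also below $1_S-te_1$ because $(1-t)e_1+te_1=e_1\le 1_S$. So this coarse-graining is not sharp.
\end{itemize}

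\textbf{The square model is not an obstruction.} There $t=\frac12$. You are right that $(a_1+a_2,a_3)$ is sharp, but $(\frac12 b_+,1_S-\frac12 b_+)$ is not, so you simply tested the wrong coarse-graining.

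\textbf{Comparison with the paper.} The paper argues by contraposition and chooses $\lambda,\mu$ to maximize $\lambda+\mu$. This makes sharpness of the three-outcome measurement follow directly from maximality, with no need to evaluate at maximizing states. The hypothesis then forces $\lambda=\mu=1$, i.e.\ $a+b\le 1_S$ for every pair, which gives classicality. Your symmetric choice $\lambda=\mu=t$ works equally well, but only once you have a pair with $\|e_1+e_2\|>1$.
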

\begin{proof}
Suppose that every three-outcome intersubjective measurement on a system \(S\) is completely intersubjective.
We show that this forces \(S\) to be classical.

First, we show that for any pair of indecomposable effects \(a\) and \(b\) of unit norm, the inequality \(a+b\le1_S\) holds.
Let \(\lambda,\mu\in[0,1]\) be chosen to maximize \(\lambda+\mu\) under the constraint \(\lambda a+\mu b\le1_S\).
Without loss of generality, we may assume that \(\lambda>0\) and \(\mu>0\); otherwise, one can take \(\lambda=\mu=\frac12\) instead.
Since \(a\) and \(b\) are indecomposable, the three-outcome measurement \(A=(\lambda a,\mu b,1-\lambda a-\mu b)\) is intersubjective.
By the assumption, the two-outcome coarse-graining \(B=(\lambda a,1-\lambda a)\) is also intersubjective.
This implies that \(\lambda=1\), and similarly \(\mu=1\); hence \(a+b\le1_S\).

Next, we show that \(S\) is classical.
Let \((a_i)_{i=1}^n\) be a mutually distinct family of indecomposable effects of unit norm.
Since \(\norm{a_i}=1\), there exists \(s_i\in S\) such that \(a_i(s_i)=1\) for each \(i\).
Since \(a_i+a_j\le1_S\) for \(i\ne j\), we have \(a_i(s_j)=\delta_{i,j}\).
Therefore, if \(\sum_{i=1}^{n}\lambda_ia_i=0_S\) with \(\lambda_i\in\mathbb R\), evaluating at \(s_j\) gives \(\lambda_j=\sum_{i=1}^{n}\lambda_ia_i(s_j)=0\).
This implies that all indecomposable effects are linearly independent.
Hence, \(S\) is classical.
\end{proof}

\begin{lem}
\label{n+1-outcome IS but CIS}
In any \(n\)-dimensional non-classical system, there exists a measurement with \(n+1\) or more outcomes that is intersubjective but not completely intersubjective.
\end{lem}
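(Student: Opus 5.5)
The plan is to build an $(n+1)$-outcome measurement whose effects are positive multiples of pairwise non-proportional indecomposable effects. Such a measurement is automatically sharp, hence intersubjective by Proposition~\ref{intersubjective iff sharp}. Complete intersubjectivity would force every element to be a sharp effect. As in the second half of the proof of Lemma~\ref{3-outcome IS but CIS}, this would make the $n+1$ effects linearly independent in an $n$-dimensional space, which is impossible.

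\emph{Construction.} The unit effect $1_S$ lies in the interior of the $n$-dimensional cone of positive affine functionals, and this cone is generated by its extremal rays, i.e.\ by the indecomposable effects.
\begin{itemize}
\item By Carath\'eodory's theorem we can write $1_S=\sum_{i=1}^n c_i r_i$ with $c_i>0$. Here the $r_i$ are indecomposable effects forming a basis of the space, because an interior point cannot lie in the conic hull of fewer than $n$ rays.
\item The system is non-classical, so by the criterion quoted before Lemma~\ref{3-outcome IS but CIS} the cone is not simplicial. Hence there is an indecomposable effect $f$ not proportional to any $r_i$.
\item Write $f=\sum_i f_i r_i$ in this basis. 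For small $\varepsilon>0$ the coefficients $c_i-\varepsilon f_i$ remain positive, so
\[
1_S=\varepsilon f+\sum_{i=1}^n (c_i-\varepsilon f_i)\,r_i .
\]
\end{itemize}
This defines a measurement $A$ with $n+1$ outcomes whose elements are positive multiples of $n+1$ pairwise non-proportional indecomposable effects. Measurements with more outcomes can be obtained afterwards by splitting elements if desired.

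\emph{Intersubjectivity.} Suppose $b\le \lambda e$ and $b\le \mu e'$, where $e,e'$ are non-proportional indecomposable effects. Indecomposability gives $b=\kappa e=\kappa' e'$, which forces $b=0_S$. Hence $A$ is sharp, and therefore intersubjective by Proposition~\ref{intersubjective iff sharp}.

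\emph{Failure of complete intersubjectivity.} Suppose, for contradiction, that $A=(a_x)$ is completely intersubjective. Then each two-outcome coarse-graining $(a_x,1_S-a_x)$ is intersubjective, hence sharp.
\begin{itemize}
\item Each $a_x=\lambda e$ has norm $1$. If $\|a_x\|<1$, then $1_S-a_x\ge(1-\|a_x\|)1_S\ge\delta e$ for some $\delta>0$, since $e\le\|e\|1_S$. Then $\min(\delta,\lambda)\,e$ would be a nonzero common lower bound of $a_x$ and $1_S-a_x$, contradicting sharpness.
\item Choose states $s_x$ with $a_x(s_x)=1$. Since $\sum_y a_y=1_S$, we get $a_y(s_x)=\delta_{x,y}$.
\item Evaluating a vanishing linear combination $\sum_y\lambda_y a_y=0_S$ at each $s_x$ shows $\lambda_x=0$. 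So the $n+1$ effects are linearly independent in an $n$-dimensional space, a contradiction.
\end{itemize}
Hence some coarse-graining of $A$ is not intersubjective.

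The main obstacle is the construction step: guaranteeing $n+1$ \emph{distinct} rays in a decomposition of $1_S$, since a naive Carath\'eodory decomposition may absorb the extra ray $f$. The perturbation in the basis $(r_i)$ is meant to handle exactly this. One also has to be careful about the ambient dimension convention: the space of affine functionals on $S$ should be the $n$-dimensional space in question.
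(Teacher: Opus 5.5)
Your sharpness argument (pairwise non-proportional indecomposable effects have no nonzero common lower bound) is correct and matches the paper. So is your argument against complete intersubjectivity: sharp nonzero effects attain the value $1$, so $n+1$ of them would be linearly independent. Your direct proof of the norm-one fact is a fine self-contained substitute for the citation to Gudder.

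The gap is in the first bullet of the construction. Carath\'eodory only gives $1_S=\sum_{i\in I}c_ir_i$ with linearly independent indecomposable $r_i$ and $|I|\le n$. Your justification, that an interior point cannot lie in the conic hull of fewer than $n$ rays, is false. Worse, a basis of indecomposable effects expanding $1_S$ with all $c_i>0$ need not exist at all. The square model of Example~\ref{CIS but not Ext} is a counterexample. There $n=3$ and the unit-norm indecomposable effects are $b_\pm,c_\pm$. Any three of them form a basis, and $1_S=b_++b_-=c_++c_-$, so the unique expansion of $1_S$ in any such basis has a vanishing coefficient. Once some $c_i=0$, positivity of $c_i-\varepsilon f_i$ requires $f_i\le 0$, which you do not control. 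If $f_i=0$ the outcome disappears, so the count $n+1$ is not established. Using only the rays actually present ($|I|<n$) gives at most $n$ outcomes.

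The fix is to reverse the order of operations, as the paper does.
\begin{itemize}
\item First choose $n+1$ pairwise non-proportional indecomposable effects $a_1,\dots,a_{n+1}$; they exist because the cone is not simplicial.
\item Use interiority of $1_S$ to get $\varepsilon>0$ with $1_S-\varepsilon\sum_i a_i\in\mathcal E_S$.
\item Decompose this remainder into finitely many indecomposable effects, by Carath\'eodory or Krein--Milman on a compact base of the cone.
\item Merge proportional terms.
\end{itemize}
Merging only enlarges coefficients of rays already present, so every ray of $a_1,\dots,a_{n+1}$ survives. The resulting measurement therefore has at least $n+1$ pairwise non-proportional indecomposable elements, and your remaining arguments apply verbatim.

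A minor point: your remark that more outcomes can be obtained by splitting elements is wrong. Splitting $\lambda e$ into proportional pieces creates a nonzero common lower bound and destroys sharpness. It is not needed anyway, since $n+1$ outcomes suffice.
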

\begin{proof}
We first show that a completely intersubjective measurement on an \(n\)-dimensional system \(S\) can have at most \(n\) outcomes.
Suppose a measurement \((a_x)_{x\in X}\) has states \((s_x)_{x\in X}\) such that \(a_x(s_x)=1\) for each \(x\in X\).
Since \(\sum_{x\in X} a_x=1_S\), we have \(a_x(s_{x'})=\delta_{x,x'}\).
If \(\sum_{x\in X}\lambda_x a_x=0_S\) with \(\lambda_x\in\mathbb{R}\), evaluating at \(s_{x'}\) yields \(\lambda_{x'}=\sum_{x\in X}\lambda_x a_x(s_{x'})=0\), so the effects \((a_x)_{x\in X}\) are linearly independent; hence, \(|X|\leq n\).

In addition, every completely intersubjective measurement is elementwise sharp, and each sharp effect \(a\) has a state \(s\) with \(a(s)=1\) \cite{Gudder1999}.
Combining this with the above argument, we conclude that every completely intersubjective measurement on \(S\) has at most \(n\) outcomes.

Now, it remains to construct an intersubjective measurement with \(n+1\) or more outcomes on any non-classical system \(S\).
There exist \(n+1\) distinct indecomposable effects \((a_i)_{i=1}^{n+1}\) of unit norm because \(S\) is non-classical.
Since \(1_S\) is an interior point of the positive cone of affine functionals on \(S\), there exists \(\varepsilon>0\) such that \(1_S-\varepsilon\sum_{i=1}^{n+1}a_i\in \mathcal E_S\).
By the Krein--Milman theorem, there exist indecomposable effects \((b_j)_{j=1}^{m}\) such that \(1_S-\varepsilon\sum_{i=1}^{n+1}a_i=\sum_{j=1}^{m}b_j\).
We then construct a measurement \(C=(c_k)_{k=1}^{l}\) by combining \((\varepsilon a_i)_{i=1}^{n+1}\) and \((b_j)_{j=1}^{m}\), summing together any elements that are scalar multiples of each other.
Since \(C\) is composed of indecomposable effects that are not scalar multiples of each other, it is a sharp measurement.
Since \(l\ge n+1\), this is the desired intersubjective measurement with \(n+1\) or more outcomes.
\end{proof}


\clearpage
\appendix
\widetext
\renewcommand{\thesection}{S\arabic{section}}
\renewcommand{\theequation}{S\arabic{section}.\arabic{equation}}
\setcounter{thm}{0}
\renewcommand{\thethm}{S.\arabic{thm}}
\setcounter{dfn}{0}
\renewcommand{\thedfn}{S.\arabic{dfn}}
\setcounter{figure}{0}
\renewcommand{\thefigure}{S.\arabic{figure}}

\begin{center}
\Large\bf Supplemental Material
\end{center}

\section{Characterization of intersubjective POVMs in infinite-dimensional quantum systems}

We present a characterization of intersubjective POVMs in quantum systems represented by Hilbert spaces that are not necessarily finite-dimensional.

\begin{thm}[Infinite-dimentional version of Theorem 2]
A POVM \(A=(a_x)_{x\in X}\) is intersubjective if and only if, for any \(x\ne x'\in X\),
\begin{equation}
\mathrm{supp}(a_x)\cap\mathrm{supp}(a_{x'})=\{0\}.
\end{equation}
\end{thm}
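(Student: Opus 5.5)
By Proposition~\ref{intersubjective iff sharp}, which holds in any system and so also on the state space of density operators on an arbitrary Hilbert space \(\mathcal H\), intersubjectivity of \(A\) is equivalent to sharpness. So the theorem reduces to the following statement about two effects \(0\le a,b\le I\) on \(\mathcal H\): there is a nonzero effect \(c\) with \(c\le a\) and \(c\le b\) if and only if \(\mathrm{supp}(a)\cap\mathrm{supp}(b)\ne\{0\}\). I will argue at this level of pairs of effects and then apply it to every pair \(a_x,a_{x'}\) with \(x\ne x'\).

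The key point in infinite dimensions is the meaning of \(\mathrm{supp}\). I will take \(\mathrm{supp}(a)\) to be the operator range \(a^{1/2}\mathcal H\), not its closure. In finite dimensions this coincides with the range of \(a\), which recovers Theorem~\ref{equivalent condition to be intersubjective}. With closed ranges the equivalence fails. By von Neumann's theorem there are injective positive contractions \(a,b\) with dense operator ranges satisfying \(a^{1/2}\mathcal H\cap b^{1/2}\mathcal H=\{0\}\). Their closed supports are both all of \(\mathcal H\), yet the argument below shows that \((a,b)\) admits no nonzero common lower bound. The main tool is Douglas' factorization lemma: for bounded \(T,S\), \(TT^*\le\lambda SS^*\) for some \(\lambda\ge0\) if and only if \(T\mathcal H\subseteq S\mathcal H\).

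For the direction ``nonzero common lower bound \(\Rightarrow\) supports intersect,'' let \(0\ne c\le a\) and \(c\le b\). Douglas' lemma gives \(c^{1/2}\mathcal H\subseteq a^{1/2}\mathcal H\) and \(c^{1/2}\mathcal H\subseteq b^{1/2}\mathcal H\). Since \(c\ne0\), the range \(c^{1/2}\mathcal H\) contains a nonzero vector, so the two supports intersect nontrivially. Hence trivial intersection implies sharpness, and therefore intersubjectivity.

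For the converse, take \(0\ne v=a^{1/2}u=b^{1/2}w\). Then
\begin{equation*}
\ketbra{v}{v}=a^{1/2}\ketbra{u}{u}a^{1/2}\le\norm{u}^2a,
\end{equation*}
and similarly \(\ketbra{v}{v}\le\norm{w}^2b\). Set \(t:=\min\{1,\norm{u}^{-2},\norm{w}^{-2}\}/\norm{v}^2\). Then \(c:=t\ketbra{v}{v}\) is a nonzero effect (its norm is at most \(1\)) with \(c\le a\) and \(c\le b\). So \(A\) is not sharp, hence not intersubjective.

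This direction is an explicit rank-one construction and is routine. I expect the only real obstacle to be pinning down the correct notion of support and invoking Douglas' lemma, rather than a spectral-projection argument, which would only see the closed ranges.
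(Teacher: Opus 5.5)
One step fails as written: the normalization of \(c\). From \(\ketbra{v}{v}\le\norm{u}^2a\) you can only conclude \(t\ketbra{v}{v}\le a\) for \(t\le\norm{u}^{-2}\), and dividing by \(\norm{v}^2\) can violate this. For example, on \(\mathcal H=\mathbb C\) with \(a=b=\frac14\), \(u=w=1\), \(v=\frac12\), your formula gives \(t=4\) and \(c=1\not\le a\). Take instead \(t:=\min\{\norm{u}^{-2},\norm{w}^{-2}\}\). Then \(0\ne c\le a\le I\), so \(c\) is automatically a nonzero effect. With this fix your proof is complete.

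Your route differs from the paper's in the hard direction. The paper makes the same reduction via Proposition~1 and handles the easy direction the same way (a lower bound has smaller support). However, it reads \(\mathrm{supp}\) as the closed range. For the converse it takes spectral projections \(e_{a,\varepsilon},e_{b,\varepsilon}\), asserts \(e_{a,\varepsilon}\wedge e_{b,\varepsilon}\to p_{\mathrm{supp}(a)}\wedge p_{\mathrm{supp}(b)}\), and then sets \(c=\varepsilon(e_{a,\varepsilon}\wedge e_{b,\varepsilon})\). Your von Neumann pair refutes exactly that limit. Since \(e_{a,\varepsilon}\mathcal H\subseteq a\mathcal H\subseteq a^{1/2}\mathcal H\), and likewise for \(b\), one has \(e_{a,\varepsilon}\wedge e_{b,\varepsilon}=0\) for every \(\varepsilon\), while the meet of the support projections is \(I\).

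So the paper's spectral argument is valid only when ranges are closed, e.g.\ in finite dimensions, where \(e_{a,\varepsilon}=p_{\mathrm{supp}(a)}\) for small \(\varepsilon\). Your Douglas-lemma/rank-one argument with \(\mathrm{supp}(a)=a^{1/2}\mathcal H\) holds in general and needs neither spectral theory nor the projection lattice.

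You should say explicitly that you are proving the theorem with this meaning of \(\mathrm{supp}\), because the closed-support version fails even for genuine POVMs. Take injective self-adjoint contractions \(S,T\) on \(\mathcal L\) with \(S\mathcal L\cap T\mathcal L=\{0\}\) (e.g.\ the square roots of your pair). Compress the coordinate PVM of \(\mathcal L\oplus\mathcal L\oplus\mathcal L\) to \(\mathcal H_0:=\{(p,q,Sp+Tq)\}\).
\begin{itemize}
\item No nonzero vector of \(\mathcal H_0^\perp=\{(-Sz,-Tz,z)\}\) has a vanishing coordinate. Hence the three elements have pairwise trivially intersecting operator ranges, and the POVM is intersubjective.
\item The third element is injective, because \(Sp+Tq=0\) forces \(p=q=0\). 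So its closed support meets that of the nonzero first element nontrivially.
\end{itemize}
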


\begin{proof}
By Proposition 1, it suffices to show that for any positive operators \(a,b\ge0\), the following two conditions are equivalent:

\begin{itemize}
\item[(i)]
\(\mathrm{supp}(a)\cap\mathrm{supp}(b)=\{0\}.\)
\item[(ii)]
For any positive operator \(c\ge0\), if \(c\le a\) and \(c\le b\), then \(c=0\).
\end{itemize}

First, assume (i).
Let \(c\ge0\) satisfy \(c\le a\) and \(c\le b\).
Then \(\mathrm{supp}(c)\subseteq\mathrm{supp}(a)\) and \(\mathrm{supp}(c)\subseteq\mathrm{supp}(b)\), and hence, \(\mathrm{supp}(c)={0}\), which implies \(c=0\).

Next, assume (ii), and suppose, to the contrary, that \(\mathrm{supp}(a)\cap\mathrm{supp}(b)\ne\{0\}.\)
Let
\begin{equation*}
a=\int_0^\infty\lambda\,de_a(\lambda),\quad b=\int_0^\infty\lambda\,de_b(\lambda)
\end{equation*}
be the spectral decompositions of \(a\) and \(b\), respectively.
For \(\varepsilon>0\), define
\begin{equation*}
e_{a,\varepsilon}:=\int_\varepsilon^\infty de_a(\lambda),\quad e_{b,\varepsilon}:=\int_\varepsilon^\infty de_b(\lambda).
\end{equation*}
As \(\varepsilon\to0^+\), the projections \(e_{a,\varepsilon}\) and \(e_{b,\varepsilon}\) converge in the strong operator topology to the support projections \(p_{\mathrm{supp}(a)}\) and \(p_{\mathrm{supp}(b)}\).
Consequently,
\begin{equation*}
e_{a,\varepsilon}\wedge e_{b,\varepsilon}\xrightarrow{\varepsilon\to0^+}p_{\mathrm{supp}(a)}\wedge p_{\mathrm{supp}(b)}>0,
\end{equation*}
where \(p\wedge q\) denotes the greatest lower bound of \(p\) and \(q\) in the complete lattice of projections on the Hilbert space.
Hence, there exists \(\varepsilon>0\) such that \(e_{a,\varepsilon}\wedge e_{b,\varepsilon}>0\).
Fix such an \(\varepsilon\) and define \(c:=\varepsilon(e_{a,\varepsilon}\wedge e_{b,\varepsilon})\).
Then \(c\le\varepsilon e_{a,\varepsilon}\le a\) and \(c\le\varepsilon e_{b,\varepsilon}\le b\), while \(c>0\).
This contradicts (ii).
\end{proof}

\begin{rmk}
The above argument extends straightforwardly to more general quantum theories in which a system is described by a von Neumann algebra.
\end{rmk}

\section{Proofs of quantitative properties of intersubjectivity}

In this appendix, we prove the quantitative properties of intersubjectivity stated in the main text.
We begin with its behavior under joint measurements and probabilistic mixtures.

\begin{prop}
\leavevmode
\begin{itemize}
\item [(i)] 
Suppose that two measurements \(A\) and \(B\) are \(\alpha\)- and \(\beta\)-intersubjective, respectively.
Then any joint measurement \(C\in\mathrm{JM}(A,B)\) is \((\alpha+\beta-1)\)-intersubjective.
\item [(ii)]
Let \(C=\lambda A+(1-\lambda)B\) with \(\lambda\in(0,1]\) be a probabilistic mixture of two measurements \(A=(a_x)_{x\in X}\) and \(B=(b_x)_{x\in X}\).
If \(A\) is not (completely) \(\alpha\)-intersubjective, then \(C\) is not (completely) \((1-\lambda(1-\alpha))\)-intersubjective.
\end{itemize}
\end{prop}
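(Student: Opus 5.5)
The plan is to argue both parts directly from the definition of \(\alpha\)-intersubjectivity, by building or projecting joint measurements. No earlier result is needed beyond the definitions.

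\textbf{Part (i).} Write \(A=(a_x)_{x\in X}\), \(B=(b_y)_{y\in Y}\), and \(C=(c_{x,y})\in\mathrm{JM}(A,B)\). Take an arbitrary \(D=(d_{(x,y),(x',y')})\in\mathrm{JM}(C,C)\).

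First, I marginalize over the \(Y\)-labels and set \(D^A_{x,x'}:=\sum_{y,y'}d_{(x,y),(x',y')}\). Its marginals are \(\sum_y c_{x,y}=a_x\) on both sides, so \(D^A\in\mathrm{JM}(A,A)\). By \(\alpha\)-intersubjectivity of \(A\), the effect \(P_1:=\sum_{x}\sum_{y,y'}d_{(x,y),(x,y')}\) satisfies \(P_1\ge\alpha1_S\).

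Symmetrically, marginalizing over the \(X\)-labels gives \(D^B\in\mathrm{JM}(B,B)\). Hence \(P_2:=\sum_y\sum_{x,x'}d_{(x,y),(x',y)}\ge\beta1_S\).

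Let \(E_1=\{x=x'\}\) and \(E_2=\{y=y'\}\) be subsets of the index set. The diagonal sum of \(D\) is the sum of \(d\) over \(E_1\cap E_2\). An inclusion--exclusion count on effects gives
\[
\sum_{E_1\cap E_2}d=P_1+P_2-\sum_{E_1\cup E_2}d\ge P_1+P_2-1_S\ge(\alpha+\beta-1)1_S .
\]
Here I use that all \(d\)'s are nonnegative effects summing to \(1_S\). Since \(D\) was arbitrary, this proves (i).

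\textbf{Part (ii), non-complete case.} If \(A\) is not \(\alpha\)-intersubjective, there exist \(B'=(b'_{x,x'})\in\mathrm{JM}(A,A)\) and a state \(s\) with \(\sum_x b'_{x,x}(s)<\alpha\).

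I choose the trivial diagonal joint measurement \(D_{x,x'}:=\delta_{x,x'}b_x\in\mathrm{JM}(B,B)\) and set \(E:=\lambda B'+(1-\lambda)D\). Its marginals are \(\lambda a_x+(1-\lambda)b_x\), so \(E\in\mathrm{JM}(C,C)\). At the state \(s\),
\[
\sum_xE_{x,x}(s)=\lambda\sum_x b'_{x,x}(s)+(1-\lambda)<\lambda\alpha+1-\lambda=1-\lambda(1-\alpha).
\]
The strict inequality uses \(\lambda>0\). Hence \(C\) is not \((1-\lambda(1-\alpha))\)-intersubjective.

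\textbf{Part (ii), complete case.} Suppose \(A\) is not completely \(\alpha\)-intersubjective, so some coarse-graining \(A^\pi\) via \(\pi:X\to Y\) is not \(\alpha\)-intersubjective. Coarse-graining is linear, so \(C^\pi=\lambda A^\pi+(1-\lambda)B^\pi\). The non-complete case applied to \(A^\pi\) and \(B^\pi\) shows \(C^\pi\) is not \((1-\lambda(1-\alpha))\)-intersubjective, so \(C\) is not completely so.

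There is no real obstacle in either part. The only point needing care is the inclusion--exclusion step in (i), which must be done at the level of effects (pointwise on states) rather than probabilities. That is valid because all \(d\)'s are positive and sum to \(1_S\).
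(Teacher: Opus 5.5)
Your proof is correct and follows the paper's argument. In (i) you marginalize $D\in\mathrm{JM}(C,C)$ into elements of $\mathrm{JM}(A,A)$ and $\mathrm{JM}(B,B)$ and apply the same inclusion--exclusion bound. In (ii) you mix a joint measurement of $A$ with one of $B$ to get one of $C$, and handle the complete case coarse-graining by coarse-graining, both as the paper does; the only difference is that the paper argues by contrapositive with an arbitrary $E\in\mathrm{JM}(B,B)$ and the bound $\sum_x e_{x,x}\le 1_S$, whereas you fix the diagonal $E$ and argue directly.
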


\begin{proof}
\leavevmode
\begin{itemize}
\item[(i)]
Let \(D=(d_{x,y,x',y'})_{(x,y),(x',y')\in X\times Y}\in\mathrm{JM}(C,C)\).
Define two measurements \(E=(e_{x,x'})_{x,x'\in X}\) and \(F=(f_{y,y'})_{y,y'\in Y}\) by
\begin{equation*}
e_{x,x'}:=\sum_{y,y'}d_{x,y,x',y'},\quad f_{y,y'}:=\sum_{x,x'}d_{x,y,x',y'}.
\end{equation*}
so that \(E\in\mathrm{JM}(A,A)\) and \(F\in\mathrm{JM}(B,B)\).
The intersubjectivity of \(A\) and \(B\) gives
\begin{equation*}
\sum_{x\in X}e_{x,x}\ge\alpha1_S,\quad\sum_{y\in Y}f_{y,y}\ge\beta1_S.
\end{equation*}
Therefore,
\begin{equation*}
\begin{aligned}
\sum_{(x,y)\in X\times Y}d_{x,y,x,y}&=\sum_{x,y,y'}d_{x,y,x,y'}+\sum_{x,y,x'}d_{x,y,x',y}\\
&\qquad{}-\sum_{\substack{x,y,x',y'\\x=x'\text{ or }y=y'}}d_{x,y,x',y'}\\
&\ge(\alpha+\beta-1)1_S.
\end{aligned}
\end{equation*}
This completes the proof.
\item[(ii)]
We first prove the statement for intersubjectivity.
Suppose that \(C\) is \((1-\lambda(1-\alpha))\)-intersubjective.
Let \(D=(d_{x,x'})_{x,x'\in X}\in \mathrm{JM}(A,A)\) and \(E=(e_{x,x'})_{x,x'\in X}\in\mathrm{JM}(B,B)\).
Then we have \(\lambda D+(1-\lambda)E\in\mathrm{JM}(C,C)\).
Thus,
\begin{equation*}
\sum_{x\in X}\bigl(\lambda d_{x,x}+(1-\lambda)e_{x,x}\bigr)\ge (1-\lambda(1-\alpha))1_S.
\end{equation*}
Therefore,
\begin{equation*}
\sum_{x\in X}d_{x,x}
\ge\frac{1-\lambda(1-\alpha)}{\lambda}1_S-\frac{1-\lambda}\lambda\sum_{x\in X}e_{x,x}
\ge\alpha 1_S.
\end{equation*}
The statement for complete intersubjectivity follows by applying the same argument to each coarse-graining.
\qedhere
\end{itemize}
\end{proof}

Next, we present detailed calculations of the degree of (complete) intersubjectivity in several simple cases.

\begin{prop}
\label{intersubjectivity of coin-tossing measurements}
A coin-tossing measurement \(A=(\lambda_x1_S)_{x\in X}\) is (completely) \(\alpha\)-intersubjective if and only if
\begin{equation}
\alpha\le\max\Bigl\{2\max_{x\in X}\lambda_x-1,0\Bigr\}.
\end{equation}
\end{prop}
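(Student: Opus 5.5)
Since every coarse-graining of a coin-tossing measurement is again a coin-tossing measurement with weights \(\mu_y=\sum_{x\in\pi^{-1}(\{y\})}\lambda_x\), I will first settle the plain (non-complete) statement. I will then check that the threshold \(\max\{2\max_y\mu_y-1,0\}\) is minimized by the identity coarse-graining, which gives the "completely" version. This last check is immediate, because merging outcomes can only increase \(\max_y\mu_y\).

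For the plain statement, I will compute the exact minimum over \(B\in\mathrm{JM}(A,A)\) of the agreement \(\sum_x b_{x,x}\). The lower bound is elementary. Since \(b_{x,x'}\le a_x\) and \(b_{x,x'}\le a_{x'}\), we have \(\sum_{x'\ne x^*}b_{x^*,x'}\le \sum_{x'\ne x^*}a_{x'}=(1-\lambda_{x^*})1_S\), where \(x^*\) maximizes \(\lambda_x\). Hence \(b_{x^*,x^*}=a_{x^*}-\sum_{x'\ne x^*}b_{x^*,x'}\ge(2\lambda_{x^*}-1)1_S\). Together with \(b_{x,x}\ge0_S\), this gives \(\sum_x b_{x,x}\ge\max\{2\lambda_{x^*}-1,0\}1_S\). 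So every \(\alpha\) up to this value works.

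For the converse, I need a joint measurement attaining this value. I will restrict to joint measurements of the form \(b_{x,x'}=p_{x,x'}1_S\), where \(p\) is a coupling of the distribution \(\lambda\) with itself. The question then reduces to the classical problem of minimizing \(\sum_x p_{x,x}\) over couplings of \(\lambda\) with \(\lambda\).

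\emph{Case 1: \(\lambda_{x^*}\ge1/2\).} Send all the mass of outcomes \(x\ne x^*\) to \(x^*\) in both directions: set \(p_{x,x^*}=p_{x^*,x}=\lambda_x\) for \(x\ne x^*\) and \(p_{x^*,x^*}=2\lambda_{x^*}-1\). The diagonal sum is then exactly \(2\lambda_{x^*}-1\).

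\emph{Case 2: \(\lambda_{x^*}<1/2\).} I need a coupling with zero diagonal, i.e., a derangement-type coupling. The plan is to order the outcomes, lay the masses \(\lambda_x\) consecutively on the circle \([0,1)\), and take \(p\) to be the distribution of \((F(U),F(U+1/2 \bmod 1))\), where \(U\) is uniform and \(F\) maps a point to the interval containing it. Every interval has length \(<1/2\), so \(U\) and \(U+1/2\) never fall in the same interval. Thus \(p_{x,x}=0\), and both marginals equal \(\lambda\).

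This construction shows that no \(\alpha\) above the threshold is achievable. I expect the zero-diagonal coupling to be the main step; the rotation-by-\(1/2\) trick handles it cleanly. I will also note the degenerate boundary case \(\lambda_{x^*}=1/2\), which falls under Case 1 and gives threshold \(0\), consistently with the formula.
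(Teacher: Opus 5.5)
Your proof is correct and follows the paper's argument for the lower bound (bounding $b_{x^*,x^*}$ through the column marginals), for the Case~1 coupling that routes all mass through $x^*$, and for reducing the complete version to the monotonicity of $\max_x\lambda_x$ under coarse-graining. The real difference is the zero-diagonal coupling in Case~2.

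The paper builds that coupling by induction on $n$. With $\lambda_1\ge\dots\ge\lambda_n$, it picks $\mu>0$ such that $\sum_{i\ge2}\min\{\lambda_i,\mu\}=1-2\lambda_1$. It then pairs the mass of outcome $1$, in both directions, with the excesses $\lambda_i-\min\{\lambda_i,\mu\}$. Finally it recursively deranges the rescaled $(n-1)$-outcome remainder.

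Your half-rotation on the circle replaces this with one explicit, non-recursive coupling. Its marginals follow directly from translation invariance of Lebesgue measure. It also works whenever all $\lambda_x\le\frac12$, since a half-open interval of length at most $\frac12$ cannot contain two points at distance $\frac12$, so the boundary value needs no separate handling.

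The paper's induction stays purely finite and combinatorial, but it needs two extra checks. First, the rescaled remainder must have maximal weight at most $\frac12$; this is asserted rather than spelled out, and it holds because the choice of $\mu$ forces $\mu\le\lambda_3$, so two outcomes share the top weight. Second, a remainder whose maximum is exactly $\frac12$ must be handled by the Case~1 construction. Your construction avoids both checks. Either construction can be applied pointwise on pure states, as the paper does for the classical-system formula.
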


\begin{proof}
Without loss of generality, let \(X=\{1,\dots,n\}\) and assume that \(\lambda_1\ge\dots\ge\lambda_n\).
We distinguish two cases.

Case 1: \(\lambda_1\ge\frac12\).
For any \(B=(b_{i,j})_{i,j=1}^n\in\mathrm{JM}(A,A)\), we have
\begin{equation*}
\sum_{i=1}^nb_{i,i}\ge b_{1,1}=\lambda_11_S-\sum_{i=2}^nb_{1,i}\ge\lambda_11_S-\sum_{i=2}^n\lambda_i1_S=(2\lambda_1-1)1_S.
\end{equation*}
Conversely, define the measurement \(B=(b_{i,j})_{i,j=1}^n\) by
\begin{equation*}
b_{i,j}:=
\begin{cases}
(2\lambda_1-1)1_S&\text{if }i=j=1,\\
\lambda_j1_S&\text{if }i=1\text{ and }j\ge2,\\
\lambda_i1_S&\text{if }i\ge2\text{ and }j=1,\\
0_S&\text{otherwise}.
\end{cases}
\end{equation*}
Then, one verifies that \(B\in\mathrm{JM}(A,A)\) and \(\sum_{i=1}^nb_{i,i}=(2\lambda_1-1)1_S\).
Hence, \(A\) is \(\alpha\)-intersubjective if and only if \(\alpha\le2\lambda_1-1\).

Case 2: \(\lambda_1<\frac12\).
We show that there exists \(B=(b_{i,j})_{i,j=1}^n\in\mathrm{JM}(A,A)\) with \(\sum_{i=1}^nb_{i,i}=0_S\) by induction on \(n\).
For \(n=1,2\), the condition \(\lambda_1<\frac12\) cannot occur.
Assume \(n\ge3\).
Then there exists \(\mu>0\) such that \(\sum_{i=2}^n\min\{\lambda_i,\mu\}=1-2\lambda_1\).
Consider the coin-tossing measurement \(C:=\bigl(\frac{\min\{\lambda_i,\mu\}}{1-2\lambda_1}1_S\bigr)_{i=1}^n\).
By the definition of \(\mu\), its maximal probability does not exceed \(\frac12\); hence, by the induction hypothesis, there exists \(D=(d_{i,j})_{i,j=2}^n\in\mathrm{JM}(C,C)\) with \(\sum_{i=2}^nd_{i,i}=0_S\).
Define the measurement \(B=(b_{i,j})_{i,j=1}^n\) by
\begin{equation*}
b_{i,j}:=
\begin{cases}
0_S&\text{if }i=j=1,\\
(\lambda_j-\min\{\lambda_j,\mu\})1_S&\text{if }i=1\text{ and }j\ge2,\\
(\lambda_i-\min\{\lambda_i,\mu\})1_S&\text{if }i\ge2\text{ and }j=1,\\
(1-2\lambda_1)d_{i,j}&\text{otherwise}.
\end{cases}
\end{equation*}
Then, one verifies that \(B\in\mathrm{JM}(A,A)\) and \(\sum_{i=1}^nb_{i,i}=0_S\).
Hence, \(A\) is \(\alpha\)-intersubjective if and only if \(\alpha\le0\).

Combining the two cases, we conclude that \(A\) is \(\alpha\)-intersubjective if and only if
\(\alpha\le\max\{2\lambda_1-1,0\}\).
Since \(\max_{x\in X}\lambda_x\) is non-decreasing under coarse-graining, the same formula holds for complete intersubjectivity.
\end{proof}

\begin{prop}
\label{calculation in classical system}
In a classical system \(S\), a measurement \(A=(a_x)_{x\in X}\) is (completely) \(\alpha\)-intersubjective if and only if
\begin{equation}
\alpha\le\inf_{s\in\mathrm{ext}(S)}\max\Bigl\{2\max_{x\in X}a_x(s)-1,0\Bigr\}.
\end{equation}
\end{prop}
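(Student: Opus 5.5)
The plan is to reduce the classical case to the coin-tossing case of Proposition~\ref{intersubjectivity of coin-tossing measurements}, applied pure state by pure state. In a classical (simplex) system \(S\) with pure states \(\mathrm{ext}(S)=\{s_1,\dots,s_d\}\), an affine function on \(S\) is determined by its values at the vertices. Moreover, \(a\le b\) holds iff \(a(s_k)\le b(s_k)\) for every \(k\), and any choice of values in \([0,1]\) at the vertices defines an effect. Hence \(\mathcal E_S\cong[0,1]^d\) coordinatewise, and a family \((b_{x,x'})\) is a measurement lying in \(\mathrm{JM}(A,A)\) iff, for each \(k\), the numbers \((b_{x,x'}(s_k))\) form a joint probability distribution with both marginals equal to \(p_k:=(a_x(s_k))_{x\in X}\). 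The joint measurements therefore factorize into independent vertexwise choices. Consequently, \(\sum_x b_{x,x}\ge\alpha1_S\) for all \(B\in\mathrm{JM}(A,A)\) iff, for each \(k\), every coupling of \(p_k\) with itself has diagonal mass at least \(\alpha\).

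The key step is the one-state statement: for a probability vector \(p=(\lambda_x)\), the minimum diagonal mass over all couplings of \(p\) with itself equals \(\max\{2\max_x\lambda_x-1,0\}\). This is exactly the content of the proof of Proposition~\ref{intersubjectivity of coin-tossing measurements}. Its lower bound \(b_{1,1}\ge\lambda_1-\sum_{i\ge2}\lambda_i\) and its explicit (inductive) constructions of couplings attaining the bound are statements about numbers only. I would therefore either cite that proof directly, noting that a coin-tossing measurement is literally a measurement on the one-vertex simplex, or restate it as a lemma about probability vectors. Combining vertexwise then gives that \(A\) is \(\alpha\)-intersubjective iff \(\alpha\le\min_k\max\{2\max_x a_x(s_k)-1,0\}\). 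The infimum in the statement is a minimum in the finite-dimensional case. For infinite-dimensional classical systems, if those are intended, I would instead work with the extreme boundary and pointwise couplings, choosing the attaining coupling measurably; for the finite setting this is not needed.

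For complete intersubjectivity, I will show that the same bound holds for every coarse-graining \(\pi:X\to Y\). Coarse-graining can only merge outcomes, so \(\max_y\sum_{x\in\pi^{-1}(y)}a_x(s)\ge\max_x a_x(s)\) at every vertex. The quantity \(\max\{2\max-1,0\}\) is therefore non-decreasing under coarse-graining, pointwise in \(s\) and hence after taking the infimum. Thus the degree of intersubjectivity of every coarse-graining is at least that of \(A\). Since \(A\) is itself a coarse-graining of itself (identity \(\pi\)), the thresholds coincide, which gives the ``completely'' version.

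The main obstacle is the careful justification of the factorization of \(\mathrm{JM}(A,A)\) into independent vertexwise couplings. The crucial point is that in a simplex, every assignment of values in \([0,1]\) at the vertices extends to an effect, so the vertexwise optimal couplings glue into a genuine measurement in \(\mathrm{JM}(A,A)\). This is precisely where classicality is used, and it is what makes the ``only if'' direction (constructing a bad joint measurement) work.
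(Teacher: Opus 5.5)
Your proposal is correct and follows essentially the same route as the paper. Both apply the coin-tossing coupling construction vertex by vertex, glue the vertexwise couplings into an element of \(\mathrm{JM}(A,A)\) by affine extension over the simplex, and obtain the complete version from the monotonicity of \(\max_x a_x(s)\) under coarse-graining. Your explicit justification of the gluing step (every assignment of values in \([0,1]\) at the vertices extends to an effect, which is where classicality enters) spells out what the paper leaves implicit in ``affinely extending it to all states.''
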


\begin{proof}
For each \(s\in\mathrm{ext}(S)\), \((a_x(s))_{x\in X}\) is a probability distribution.
Constructing a joint distribution \((b_{x,x'}(s))_{x,x'\in X}\) for it exactly as in the proof of Proposition \ref{intersubjectivity of coin-tossing measurements}, we obtain that the maximal value of \(\sum_{x\in X}b_{x,x}(s)\) equals \(\max\{2\max_{x\in X}a_x(s)-1,0\}\).
Performing this construction for each \(s\in\mathrm{ext}(S)\) and affinely extending it to all states in \(S\), we obtain a measurement
\(B=(b_{x,x'})_{x,x'\in X}\), which belongs to \(\mathrm{JM}(A,A)\).
Hence, \(A\) is \(\alpha\)-intersubjective if and only if
\(\alpha\le\inf_{s\in\mathrm{ext}(S)}\max\{2\max_{x\in X}a_x(s)-1,0\}\).
Since \(\inf_{s\in\mathrm{ext}(S)}\max_{x\in X}a_x(s)\) is non-decreasing under coarse-graining, the same formula holds for complete intersubjectivity.
\end{proof}

\begin{prop}
\label{intersubjectivity of unbiased measurement}
In a qubit system, an unbiased two-outcome measurement \(A=\bigl(\frac12(I+\bm\lambda\cdot\bm\sigma),\frac12(I-\bm\lambda\cdot\bm\sigma)\bigr)\) is (completely) \(\alpha\)-intersubjective if and only if
\begin{equation}
\alpha\le\abs{\bm\lambda}^2.
\end{equation}
\end{prop}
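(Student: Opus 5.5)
For a two-outcome measurement the only coarse-grainings are $A$ itself and trivial one-outcome measurements (or relabelings), which are intersubjective. So it suffices to compute the intersubjectivity degree of $A$. By the End Matter remark after Lemma~\ref{quantitative relation between intersubjectivity and shapness}, a two-outcome measurement is $\alpha$-intersubjective if and only if it is $\frac{1+\alpha}2$-sharp. We therefore compute the largest $t$ such that every effect $c$ with $c\le a_+$ and $c\le a_-$ satisfies $c\le t I$, where $a_\pm=\frac12(I\pm\bm\lambda\cdot\bm\sigma)$. The claim reduces to showing that this largest operator norm equals $t^*=\frac{1-|\bm\lambda|^2}{2}$, since $\frac{1+\alpha}2\le 1-t^*$ is equivalent to $\alpha\le|\bm\lambda|^2$.

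Alternatively, we can work directly with joint measurements. A joint measurement $B\in\mathrm{JM}(A,A)$ is fixed by a single operator $c=b_{+,-}$; symmetrizing if needed, take $b_{-,+}=c$ as well. Then $b_{+,+}=a_+-c$ and $b_{-,-}=a_--c$, the constraints are $0\le c\le a_\pm$, and the agreement operator is $\sum_x b_{x,x}=I-2c$. The quantity to compute is therefore $\inf_c\bigl(1-2\|c\|\bigr)$ over $0\le c\le a_+,a_-$. Asymmetric $b_{+,-}\neq b_{-,+}$ should not matter, since the proof of Lemma~\ref{quantitative relation between intersubjectivity and shapness}(ii) handles the general case via averaging. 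To keep things simple I would use the sharpness formulation.

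\emph{Upper bound on $\|c\|$.} Let $c\le a_\pm$ with $c\ge 0$, and take any unit vector $\psi$. Write $\psi$ in the eigenbasis $\ket{\pm\hat n}$ of $\bm\lambda\cdot\bm\sigma$, with $|\lambda|=|\bm\lambda|$. Then $a_+=\mathrm{diag}(\frac{1+|\lambda|}2,\frac{1-|\lambda|}2)$ and $a_-=\mathrm{diag}(\frac{1-|\lambda|}2,\frac{1+|\lambda|}2)$. Since $c$ is PSD, a direct eigenvalue bound is awkward. Instead, I would bound the largest eigenvalue $\mu$ of $c$ with eigenvector $\phi=(\cos\theta,\sin\theta)$. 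Then $c\ge\mu\ketbra{\phi}{\phi}$, so $\mu\ketbra\phi\phi\le a_\pm$. For a rank-one operator, $\mu\ketbra\phi\phi\le a$ with $a$ invertible iff $\mu\le 1/\bra\phi a^{-1}\ket\phi$. Computing,
\[
\bra\phi a_+^{-1}\ket\phi=\frac{2\cos^2\theta}{1+|\lambda|}+\frac{2\sin^2\theta}{1-|\lambda|},
\]
and $a_-$ gives the same expression with $\cos$ and $\sin$ swapped. The binding constraint is the larger of the two, which is minimized at $\cos^2\theta=\frac12$. There both equal $\frac{2}{1-|\lambda|^2}$, giving $\mu\le\frac{1-|\lambda|^2}{2}$. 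The case $|\lambda|=1$ is degenerate: $A$ is then a PVM and trivially $1$-intersubjective, consistent with the formula.

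\emph{Attainment.} Take $\phi=\ket{+\hat n}+\ket{-\hat n}$ normalized and $c=\frac{1-|\lambda|^2}{2}\ketbra\phi\phi$. By the rank-one criterion, $c\le a_\pm$, so its norm achieves $t^*$. Plugging this into the joint measurement construction of Lemma~\ref{quantitative relation between intersubjectivity and shapness}(ii) gives $\sum_x b_{x,x}=I-2c$, whose minimum over states is $1-(1-|\bm\lambda|^2)=|\bm\lambda|^2$. This shows $A$ is not $\alpha$-intersubjective for any $\alpha>|\bm\lambda|^2$. Conversely, the bound $\|c\|\le t^*$ for every admissible $c$, combined with the two-outcome identity $\sum_x b_{x,x}=I-b_{+,-}-b_{-,+}\ge(1-2t^*)I$, yields $|\bm\lambda|^2$-intersubjectivity.

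The main obstacle is the optimization over $\theta$ together with the rank-one domination criterion $\mu\ketbra\phi\phi\le a\iff\mu\bra\phi a^{-1}\ket\phi\le1$. This criterion follows from Cauchy--Schwarz in the $a$-inner product; everything else is bookkeeping.
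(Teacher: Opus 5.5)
Your proof is correct. Its skeleton matches the paper's: both reduce the claim, via the two-outcome case of Lemma~\ref{quantitative relation between intersubjectivity and shapness} or equivalently via $\sum_x b_{x,x}=I-b_{+,-}-b_{-,+}$, to showing that $t^*=\sup\{\norm{c}: 0\le c\le a_+,\ c\le a_-\}$ equals $\frac{1-\abs{\bm\lambda}^2}{2}$. The difference is in how you solve this optimization.

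The paper expands $c=b_0I+\bm b\cdot\bm\sigma$ and rewrites the operator inequalities as $\abs{\bm b}\le b_0\le\frac12-\abs{\bm b\mp\frac12\bm\lambda}$. It then locates the optimum geometrically in the Bloch ball, where the ellipse $\abs{\bm b}+\abs{\bm b+\frac12\bm\lambda}=\frac12$ meets the plane $\bm\lambda\cdot\bm b=0$. The location of that optimum is justified by the figure rather than computed.

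You instead note that the top eigenpair $(\mu,\phi)$ of any admissible $c$ gives an admissible rank-one operator $\mu|\phi\rangle\langle\phi|$ with the same norm. You then apply the Cauchy--Schwarz criterion $\mu|\phi\rangle\langle\phi|\le a\iff\mu\langle\phi|a^{-1}|\phi\rangle\le 1$. This turns the problem into minimizing the larger of two affine functions of $\cos^2\theta$. Your optimizer is an equal-weight superposition of the eigenvectors of $\bm\lambda\cdot\bm\sigma$ with eigenvalue $\frac{1-\abs{\bm\lambda}^2}{2}$, which is exactly the paper's $\bm b^*$ (with $b_0=\abs{\bm b^*}=\frac{1-\abs{\bm\lambda}^2}{4}$).

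What your route buys:
\begin{itemize}
\item Every step is explicit.
\item It treats the ``(completely)'' clause, which the paper leaves implicit.
\item It does not rely on Bloch-ball geometry. Since $a$ and $I-a$ are simultaneously diagonalizable, the same reduction works for any two-outcome POVM $(a,I-a)$ in finite dimension: the degree becomes an optimization over the weights of $\phi$ in the eigenbasis of $a$.
\end{itemize}
The paper's route buys a clean geometric picture, specific to qubits, that needs no separate treatment of the degenerate case $\abs{\bm\lambda}=1$; your inverse-based criterion does need it, and you supply it.

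Two cosmetic points. First, for a self-joint measurement of a two-outcome measurement, $b_{+,-}=a_+-b_{+,+}=b_{-,+}$ automatically, so your symmetrization remark is unnecessary. Second, $\phi$ should carry a relative phase, $\phi=(\cos\theta,e^{i\varphi}\sin\theta)$. This is harmless, because $a_\pm^{-1}$ is diagonal in that basis.
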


\begin{proof}
By Lemma 5 in the End Matter, it suffices to show that \(A\) is \(\alpha\)-sharp if and only if \(\alpha\le\frac{1+\abs{\bm\lambda}^2}2\), that is, that the optimal value of the minimization problem
\begin{equation*}
\begin{aligned}
\min_b\quad&1-\norm{b}_\infty\\
\text{subject to}\quad&0\le b,\quad b\le\frac12(I+\bm\lambda\cdot\bm\sigma),\quad b\le\frac12(I-\bm\lambda\cdot\bm\sigma)
\end{aligned}
\end{equation*}
is \(\frac{1+\abs{\bm\lambda}^2}2\).
Expanding \(b\) in the Pauli basis as \(b=b_0I+\bm b\cdot\bm\sigma\), we obtain the equivalent optimization problem
\begin{equation*}
\begin{aligned}
\min_{b_0,\bm b}\quad&1-b_0-\abs{\bm b}\\
\text{subject to}\quad&\abs{\bm b}\le b_0,\quad b_0\le\frac12-\abs{\bm b-\frac12\bm\lambda},\quad b_0\le\frac12-\abs{\bm b+\frac12\bm\lambda}.
\end{aligned}
\end{equation*}
At the optimum, \(b_0\) must saturate the tightest upper bound, namely, \(b_0=\min\bigl\{\frac12-\abs{\bm b-\frac12\bm\lambda},\frac12-\abs{\bm b+\frac12\bm\lambda}\bigr\}\).
By symmetry under the transformation \(\bm b\mapsto-\bm b\), it suffices to consider the case \(\bm\lambda\cdot\bm b\ge0\).
Then, the problem reduces to
\begin{equation*}
\begin{aligned}
\min_{\bm b}\quad&\frac12+\abs{\bm b+\frac12\bm\lambda}-\abs{\bm b}\\
\text{subject to}\quad&\abs{\bm b}+\abs{\bm b+\frac12\bm\lambda}\le\frac12,\quad\bm\lambda\cdot\bm b\ge0.
\end{aligned}
\end{equation*}
Therefore, as illustrated in Figure \ref{fig:proof of intersubjectivity of unbiased measurement}, in the shaded region bounded by the ellipse with foci at the origin and \(-\frac12\bm\lambda\) and the line perpendicular to its major axis, the optimal solution \(\bm b^*\) is the point that minimizes the difference between the distances to these two points.
At this point, the value of \(\frac12+\abs{\bm b^*+\frac12\bm\lambda}-\abs{\bm b^*}\) is equal to \(\frac{1+\abs{\bm\lambda}^2}2\).
\qedhere

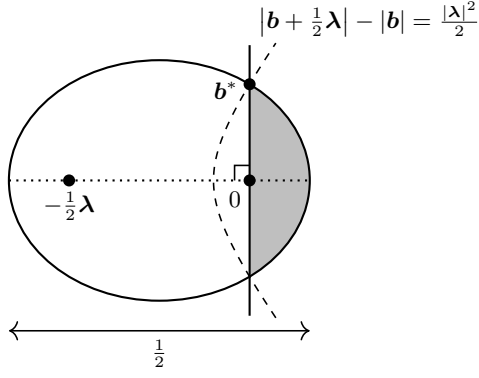
\begin{figure}
\begin{tikzpicture}[scale=0.8]
\path[use as bounding box](-4,-3.1)rectangle(2,2.9);
\begin{scope}
\clip(-1.5,0)circle[x radius=2.5,y radius=2];
\fill[gray,opacity=0.5](0,-2)rectangle(1,2);
\end{scope}
\draw[thick](-1.5,0)circle[x radius=2.5,y radius=2];
\draw[thick](0,-2.25)--(0,2.25);
\draw[thick,dotted](-4,0)--(1,0);
\draw[semithick](0,0.25)--(-0.25,0.25)--(-0.25,0);
\draw[semithick,dashed]plot[domain=-1.4:1.4,smooth]({0.9*cosh(\x)-1.5},{1.2*sinh(\x)});
\draw[above right](0,2.25)node{\(\abs{\bm b+\tfrac12\bm\lambda}-\abs{\bm b}=\tfrac{\abs{\bm\lambda}^2}2\)};
\fill(0,0)circle[radius=0.1];
\draw[below left](0,-0.05)node{\(0\)};
\fill(-3,0)circle[radius=0.1];
\draw[below](-3,-0.05)node{\(-\frac12\bm\lambda\)};
\fill(0,1.6)circle[radius=0.1];
\draw[left](-0.05,1.5)node{\(\bm b^*\)};
\draw[semithick,{<[scale=1.2]}-{>[scale=1.2]}](-4,-2.5)--(1,-2.5);
\draw[below](-1.5,-2.5)node{\(\frac12\)};
\end{tikzpicture}
\caption{
Geometric illustration of the optimization problem in Proposition \ref{intersubjectivity of unbiased measurement}.
}
\label{fig:proof of intersubjectivity of unbiased measurement}
\end{figure}

\end{proof}

\section{Proofs of Example 3 and Example 4}

In this appendix, we provide proofs of Example 3 and Example 4 presented in the End Matter.

\begin{proof}[Proof of Example 3]
Let
\begin{equation*}
S:=\left\{(x_1,x_2,x_3)\in\mathbb R^3\:\middle|\:-x_3\le x_1\le x_3,\ -x_3\le x_2\le x_3,\ x_3=1\right\},
\end{equation*}
and let \(A:=\frac12B+\frac12C\), where \(B=(b_+,b_-)\) and \(C=(c_+,c_-)\) are defined by
\begin{equation*}
b_\pm(x_1,x_2,x_3):=\frac12(x_3\pm x_1),\quad c_\pm(x_1,x_2,x_3):=\frac12(x_3\pm x_2).
\end{equation*}
We show that \(A\) is completely intersubjective but not extremal.
It is clear that \(A\) is not extremal.
Hence, it suffices to show that \(A\) is sharp.
Suppose that an effect \(d\) satisfies
\begin{equation*}
d\le\frac12b_++\frac12c_+,\quad d\le\frac12b_-+\frac12c_-.
\end{equation*}
Evaluating these inequalities at the states \((-1,-1,1)\) and \((1,1,1)\), respectively, gives \(d(-1,-1,1)=d(1,1,1)=0\).
Moreover, since \(d\) is affine, we have \(d(1,-1,1)+d(-1,1,1)=d(-1,-1,1)+d(1,1,1)=0\).
By positivity of \(d\), this implies \(d(1,-1,1)=d(-1,1,1)=0\).
Therefore, \(d=0_S\), and hence \(A\) is sharp.
\end{proof}

\begin{proof}[Proof of Example 4]
Let
\begin{equation*}
S:=\left\{(x_1,x_2,x_3,x_4,x_5)\in\mathbb R^5\:\middle|\:0\le x_i\ (i=1,\dots,5),\ x_5\le x_1+x_4,\ x_5\le x_2+x_3,\ x_1+x_2+x_3+x_4=1\right\}.
\end{equation*}
Let \(A=(a_1,a_2,a_3,a_4)\), where
\begin{equation*}
a_i(x_1,x_2,x_3,x_4,x_5):=x_i\quad(i=1,2,3,4).
\end{equation*}
We also define \(B:=(a_1+a_2,a_3+a_4)\), \(C:=(a_1+a_3,a_2+a_4)\), and \(s_i:=(\delta_{1,i},\delta_{2,i},\delta_{3,i},\delta_{4,i},0)\in S\) for \(i=1,2,3,4\).

We first show that \(A\) is elementwise sharp and extremal, but not completely intersubjective.
To prove elementwise sharpness, suppose that an effect \(e\) satisfies \(e\le a_1\) and \(e\le1_S-a_1\).
Then \(e\) must be of the form
\begin{equation*}
e(x_1,x_2,x_3,x_4,x_5)=\lambda x_5
\end{equation*}
for some constant \(\lambda\).
Positivity of \(e\) gives \(\lambda\ge0\), while evaluating \(e\le a_1\) at \(\bigl(0,0,\frac12,\frac12,\frac12\bigr)\in S\) gives \(\lambda\le0\).
Thus \(\lambda=0\), and hence \(e=0_S\).
Therefore \(a_1\) is a sharp effect.
The same argument applies to \(a_2\), \(a_3\), and \(a_4\).

The extremality of \(A\) follows from the fact that \(a_1,a_2,a_3,a_4\) are linearly independent indecomposable effects.

On the other hand, \(A\) is not completely intersubjective. Indeed, the effect \(e\), defined by
\begin{equation*}
e(x_1,x_2,x_3,x_4,x_5):=x_5,
\end{equation*}
satisfies \(e\le a_1+a_4\) and \(e\le a_2+a_3\).
Thus the coarse-graining \(D:=(a_1+a_4,a_2+a_3)\) of \(A\) is not sharp, and hence \(A\) is not completely intersubjective.

We next show that \(B\) and \(C\) are completely intersubjective.
Suppose that an effect \(e\) satisfies \(e\le a_1+a_2\) and \(e\le a_3+a_4\).
Then \(e\) must be of the form
\begin{equation*}
e(x_1,x_2,x_3,x_4,x_5)=\lambda x_5
\end{equation*}
for some constant \(\lambda\).
Positivity of \(e\) gives \(\lambda\ge0\), while evaluating \(e\le a_1+a_2\) at \(\bigl(0,0,\frac12,\frac12,\frac12\bigr)\in S\) gives \(\lambda\le0\).
Thus \(\lambda=0\), and hence \(e=0_S\).
Therefore, \(B\) is sharp, and hence completely intersubjective.
The same argument applies to \(C\).

Finally, we show that \((s_1,s_2,s_3,s_4)\) is perfectly distinguished only by \(A\).
Let \(A'=(a'_1,a'_2,a'_3,a'_4)\) be a measurement that perfectly distinguishes these states. Then each \(a'_i\) must be of the form
\begin{equation*}
a'_i(x_1,x_2,x_3,x_4,x_5)=x_i+\lambda_ix_5
\end{equation*}
for some constant \(\lambda_i\).
Positivity of \(a'_i\) implies \(\lambda_i\ge0\) for each \(i\).
Since \(A'\) is a measurement, we have \(\lambda_1+\lambda_2+\lambda_3+\lambda_4=0\).
Therefore \(\lambda_1=\lambda_2=\lambda_3=\lambda_4=0\), which implies \(A'=A\).
\end{proof}

\section{Results for continuous-outcome measurements}
\label{Results for continuous-outcome measurements}

Several of our results presented in the main text remain valid for continuous-outcome measurements.
In this appendix, we extend our analysis to this more general setting.

A continuous-outcome measurement is described by an Effect-Valued Measure (EVM) defined on a measurable space \((X,\Sigma_X)\).
Here, \(X\) denotes the outcome space and \(\Sigma_X\) the \(\sigma\)-algebra of its measurable subsets.
An EVM on \(X\) is a map \(A:\Sigma_X\to\mathcal E_S\) such that \cite{Davies1970}:
\begin{enumerate}
\item \(A(X)=1_S\);
\item For any countable collection of mutually disjoint sets \(\{U_n\}_{n\in\mathbb{N}}\subseteq\Sigma_X\),
\begin{equation}
A\Biggl(\bigcup_{n\in\mathbb{N}}U_n\Biggr)=\sum_{n\in\mathbb{N}}A(U_n),
\end{equation}
\textit{i.e.}, \(A\bigl(\bigcup_{n\in\mathbb{N}}U_n\bigr)(s)=\sum_{n\in\mathbb{N}}A(U_n)(s)\) for every state \(s\in S\).
\end{enumerate}

For a state \(s\in S\), the map \(U\mapsto A(U)(s)\) defines a probability measure on \(X\), representing the outcome statistics of the measurement.
Finite-outcome measurements, which we discussed in the main text, correspond to the special case in which \(X\) is a finite set equipped with the discrete \(\sigma\)-algebra.
As in the finite-outcome case, we denote by \(\mathcal M_S(X)\) the set of all EVMs on \(X\).

A measurement \(C\in\mathcal M_S(X\times Y)\) is called a \textit{joint measurement} of a pair of measurements \(A\in\mathcal M_S(X)\) and \(B\in\mathcal M_S(Y)\) if its marginals reproduce \(A\) and \(B\), that is,
\begin{equation}
\forall U\in\Sigma_X,\quad C(U\times Y)=A(U),\quad\text{and}\quad\forall V\in\Sigma_Y,\quad C(X\times V)=B(V).
\end{equation}
We denote the set of all joint measurements of \(A\) and \(B\) by \(\mathrm{JM}(A,B)\).

A measurement \(B\in\mathcal M_S(Y)\) is called a \textit{coarse-graining} of a measurement \(A\in\mathcal M_S(X)\) if there exists a measurable map \(\pi:X\to Y\) such that
\begin{equation}
\forall V\in\Sigma_Y,\quad A\bigl(\pi^{-1}(V)\bigr)=B(V).
\end{equation}

We now introduce the notion of (complete) intersubjectivity for continuous-outcome measurements.

\begin{dfn}
A measurement \(A\in\mathcal M_S(X)\) is said to be \textit{intersubjective} if \(\mathrm{JM}(A,A)\) is a singleton, that is, if every \(B\in\mathrm{JM}(A,A)\) coincides with the canonical joint measurement:
\begin{equation}
\forall U_1,U_2\in\Sigma_X,\quad B(U_1\times U_2)=A(U_1\cap U_2).
\end{equation}
We say that \(A\) is \textit{completely intersubjective} if every coarse-graining of \(A\) is intersubjective.
\end{dfn}

\begin{lem}
A measurement \(A\in\mathcal M_S(X)\) is completely intersubjective if and only if \(A(U)\) is a sharp effect for every \(U\in\Sigma_X\).
\end{lem}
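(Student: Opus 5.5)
The strategy is to reduce both directions to two-outcome coarse-grainings. The key auxiliary claim is the continuous analogue of Proposition~\ref{intersubjective iff sharp} for two-outcome measurements: for an effect \(a\), the two-outcome measurement \((a,1_S-a)\) is intersubjective in the sense of the singleton definition, i.e.\ \(\mathrm{JM}\) is a singleton, if and only if \(a\) is sharp. On a finite outcome set the singleton definition agrees with the finite one. Indeed, \(\sum_x b_{x,x}=1_S\) forces \(b_{x,x'}=0_S\) for \(x\ne x'\), and then the marginal conditions give \(b_{x,x}=a_x\), so \(B\) is the canonical joint measurement. Hence this claim is exactly Proposition~\ref{intersubjective iff sharp} with \(n=2\). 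For completeness, I would still spell out the two-outcome argument. If \(c\le a\) and \(c\le 1_S-a\), the measurement
\[(b_{1,1},b_{1,2},b_{2,1},b_{2,2})=(a-c,\,c,\,c,\,1_S-a-c)\]
lies in \(\mathrm{JM}\). Conversely, any joint measurement has off-diagonal entries \(b_{1,2},b_{2,1}\le a,1_S-a\).

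For the forward direction, assume \(A\) is completely intersubjective and fix \(U\in\Sigma_X\). Take \(Y=\{0,1\}\) with the discrete \(\sigma\)-algebra and \(\pi=\chi_U\), which is measurable. The coarse-graining is then \(B=(A(U),A(X\setminus U))=(A(U),1_S-A(U))\). By assumption \(B\) is intersubjective, so by the auxiliary claim \(A(U)\) is sharp.

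For the converse, assume every \(A(U)\) is sharp. Let \(B\in\mathcal M_S(Y)\) be a coarse-graining via \(\pi\). Every \(B(V)=A(\pi^{-1}(V))\) is then sharp, so it suffices to show that any EVM \(B\) with all values sharp is intersubjective. Take \(D\in\mathrm{JM}(B,B)\) and measurable \(V_1,V_2\). Put \(W_1=V_1\setminus V_2\), \(W_2=V_2\setminus V_1\), \(W_0=V_1\cap V_2\). Then \(D(W_1\times W_2)\le D(W_1\times Y)=B(W_1)\). Also \(D(W_1\times W_2)\le D(Y\times W_2)=B(W_2)\le 1_S-B(W_1)\), because \(W_1\) and \(W_2\) are disjoint. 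Sharpness of \(B(W_1)\) therefore gives \(D(W_1\times W_2)=0_S\). More generally, \(D(P\times Q)=0_S\) whenever \(P\cap Q=\emptyset\). Using additivity of \(D\) on the rectangles \(W_i\times W_j\), we get
\[D(V_1\times V_2)=D(W_0\times W_0)+D(W_0\times W_2)+D(W_1\times W_0)+D(W_1\times W_2).\]
The last three terms vanish, since their sides are disjoint. It remains to compute \(D(W_0\times W_0)\). We have \(B(W_0)=D(W_0\times Y)=D(W_0\times W_0)+D(W_0\times(Y\setminus W_0))\), and the second term vanishes by disjointness. Hence \(D(V_1\times V_2)=B(V_1\cap V_2)\), which is the canonical form.

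The main obstacle I expect is subtle. The definition of \(\mathrm{JM}\) and the singleton condition only constrain \(D\) on measurable rectangles, and the notion of joint EVM is on the product \(\sigma\)-algebra. Agreement on rectangles determines \(D\) there, because rectangles form a \(\pi\)-system generating the product \(\sigma\)-algebra. I would invoke this for each state \(s\): the probability measures \(D(\cdot)(s)\) and \(A(\cdot\cap\cdot)(s)\) are equal on the \(\pi\)-system, hence everywhere by Dynkin's theorem. Only finite additivity on rectangles is otherwise used, so no countable-additivity issues arise in the sharpness step.
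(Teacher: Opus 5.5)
Your proof is correct and follows essentially the same route as the paper. The forward direction goes through the two-outcome coarse-graining $(A(U),1_S-A(U))$ and the two-outcome sharpness--intersubjectivity equivalence. The converse uses sharpness to force $D(P\times Q)=0_S$ for disjoint $P,Q$, identifies $D$ with the canonical joint measurement on rectangles, and then extends (Dynkin in your version, Carath\'eodory in the paper's). Your direct computation $D(V_1\times V_2)=D(W_0\times W_0)=B(V_1\cap V_2)$ is only a slightly cleaner substitute for the paper's inequality-plus-summation-over-complements step. One notational slip: in the final extension step the comparison measure should be the canonical joint measurement of $B$, not $A(\cdot\cap\cdot)$.
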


\begin{proof}
If \(A\) is completely intersubjective, then for any \(U\in\Sigma_X\), the two-outcome coarse-graining \(B:=(A(U),A(X\setminus U))\) is intersubjective.
Therefore, \(A(U)\) is a sharp effect.

Conversely, suppose that \(A(U)\) is a sharp effect for every \(U\in\Sigma_X\).
Let \(B\in\mathcal M_S(Y)\) be a coarse-graining of \(A\) via a measurable map \(\pi:X\to Y\), and let \(C\in\mathrm{JM}(B,B)\).
For any \(V_1,V_2\in\Sigma_Y\), we have
\begin{equation*}
C(V_1\times V_2)\le B(V_1)=A\bigl(\pi^{-1}(V_1)\bigr),\quad C(V_1\times V_2)\le B(V_2)=A\bigl(\pi^{-1}(V_2)\bigr).
\end{equation*}
In particular, if \(V_1\cap V_2=\emptyset\), then \(\pi^{-1}(V_1)\cap \pi^{-1}(V_2)=\emptyset\), and the sharpness of \(A\bigl(\pi^{-1}(V_1)\bigr)\) implies \(C(V_1\times V_2)=0_S\).
For arbitrary \(V_1,V_2\in\Sigma_Y\), it then follows that
\begin{equation*}
\begin{aligned}
C(V_1\times V_2)&=C((V_1\cap V_2)\times V_2)+C((V_1\setminus V_2)\times V_2)\\
&=C((V_1\cap V_2)\times V_2)\\
&\le B(V_1\cap V_2).
\end{aligned}
\end{equation*}
Summing this inequality over all choices of complements of \(V_1\) and \(V_2\), both sides add up to \(1_S\).
Hence, equality must hold term by term:
\begin{equation*}
\forall V_1,V_2\in\Sigma_Y,\quad C(V_1\times V_2)=B(V_1\cap V_2).
\end{equation*}
Since the sets of the form \(V_1\times V_2\) generate the \(\sigma\)-algebra on \(Y\times Y\), Carath\'eodory's extension theorem ensures that \(C\) coincides with the canonical joint measurement of \(B\) with itself.
Therefore, \(B\) is intersubjective.
\end{proof}

From the fact that in quantum theory, an effect is sharp if and only if it is represented by a projection, we obtain the following characterization of PVMs.

\begin{thm}[Continuous-outcome version of Theorem 3]
A (continuous-outcome) POVM is a PVM if and only if it is completely intersubjective.
\end{thm}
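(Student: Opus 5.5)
The plan is to reduce the statement to the preceding lemma, which says that an EVM \(A\in\mathcal M_S(X)\) is completely intersubjective exactly when \(A(U)\) is a sharp effect for every \(U\in\Sigma_X\). In quantum theory the effects are operators \(0\le a\le I\) on a Hilbert space \(\mathcal H\). A POVM is a PVM precisely when every \(A(U)\) is a projection. Both directions will therefore follow once we know that an effect is sharp if and only if it is a projection. The main text already uses this fact; I will nonetheless sketch it, because the infinite-dimensional case needs care.

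\emph{Projections are sharp.} Let \(p\) be a projection and suppose \(0\le c\le p\) and \(c\le I-p\). From \(c\le p\) we get \((I-p)c(I-p)\le 0\). Hence \(c^{1/2}(I-p)=0\), that is, \(c=pcp\). Symmetrically, \(c\le I-p\) gives \(c=(I-p)c(I-p)\). Combining the two, \(c=pcp=p(I-p)c(I-p)p=0\).

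\emph{Sharp effects are projections.} Let \(a\) be a sharp effect and take its spectral decomposition \(a=\int_0^1\lambda\,de_a(\lambda)\). If \(a\) is not a projection, its spectral measure gives positive weight to some interval \([\varepsilon,1-\varepsilon]\) with \(\varepsilon>0\). Let \(q:=e_a([\varepsilon,1-\varepsilon])\ne0\). Since \(q\) commutes with \(a\), functional calculus gives \(\varepsilon q\le aq\le a\) and \(\varepsilon q\le(I-a)q\le I-a\). So \(c:=\varepsilon q\) is a nonzero common lower bound of \(a\) and \(I-a\), which contradicts sharpness. Alternatively, one can apply the support-intersection criterion from the infinite-dimensional version of Theorem 2 to the pair \((a,I-a)\). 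If \(a\) is not a projection, \(\mathrm{supp}(a)\cap\mathrm{supp}(I-a)\) contains the range of \(q\) and is therefore nonzero.

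\emph{Assembly.} Suppose \(A\) is a PVM. Then every \(A(U)\) is a projection, hence sharp, and the lemma shows that \(A\) is completely intersubjective. Conversely, suppose \(A\) is completely intersubjective. The lemma says every \(A(U)\) is sharp, hence a projection, so \(A\) is a PVM.

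The only genuinely delicate point is the equivalence of sharpness and projectivity when \(\mathcal H\) is infinite-dimensional. Eigenvectors need not exist there, so the argument must go through spectral projections of intervals rather than eigenvalues. The measure-theoretic work has already been done in the lemma, via Carathéodory's extension theorem.
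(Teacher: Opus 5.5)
Your proposal is correct and takes the paper's route: the paper also obtains the theorem from the preceding lemma (complete intersubjectivity $\iff$ every $A(U)$ is sharp) together with the fact that a quantum effect is sharp if and only if it is a projection. The only difference is that the paper just cites that fact, whereas you prove it, using the spectral projection $e_a([\varepsilon,1-\varepsilon])$ so that the argument also covers infinite-dimensional Hilbert spaces.
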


Next, we summarize some basic properties of (completely) intersubjective measurements.

\begin{prop}
Suppose that two measurements \(A\) and \(B\) are intersubjective.
Then any joint measurement \(C\in\mathrm{JM}(A,B)\) is also intersubjective.
\end{prop}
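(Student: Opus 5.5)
We argue in the continuous-outcome setting, where intersubjectivity of \(A\in\mathcal M_S(X)\) means that \(\mathrm{JM}(A,A)\) consists only of the canonical joint measurement \(U_1\times U_2\mapsto A(U_1\cap U_2)\). Let \(A\in\mathcal M_S(X)\) and \(B\in\mathcal M_S(Y)\) be intersubjective, let \(C\in\mathrm{JM}(A,B)\), and let \(D\in\mathrm{JM}(C,C)\). Here \(D\) is an EVM on \((X\times Y)\times(X\times Y)\). The goal is to show
\begin{equation*}
D(W_1\times W_2)=C(W_1\cap W_2)\quad\text{for all } W_1,W_2\in\Sigma_{X\times Y}.
\end{equation*}
The plan mirrors the finite proof of part (i) of the quantitative proposition with \(\alpha=\beta=1\): push \(D\) forward to the two marginals, use intersubjectivity of \(A\) and \(B\) there, and then recombine.

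First, define \(E\in\mathcal M_S(X\times X)\) as the image of \(D\) under the measurable map \(((x,y),(x',y'))\mapsto(x,x')\). Similarly, define \(F\in\mathcal M_S(Y\times Y)\) via \(((x,y),(x',y'))\mapsto(y,y')\). One checks directly from the marginal conditions that \(E\in\mathrm{JM}(A,A)\) and \(F\in\mathrm{JM}(B,B)\). By intersubjectivity, \(E(U_1\times U_2)=A(U_1\cap U_2)\) and \(F(V_1\times V_2)=B(V_1\cap V_2)\). In particular, for disjoint \(U_1,U_2\) we get \(D((U_1\times Y)\times(U_2\times Y))=0_S\), and likewise in the \(Y\)-coordinates for disjoint \(V_1,V_2\).

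Second, compute \(D\) on products of rectangles, \(D((U_1\times V_1)\times(U_2\times V_2))\). Split each \(U_i\) as \((U_1\cap U_2)\sqcup(U_i\setminus U_j)\), and similarly each \(V_i\). By monotonicity, every piece involving a disjoint pair of \(X\)-sets or of \(Y\)-sets is dominated by one of the vanishing effects above, so it is \(0_S\). The only surviving term is
\begin{equation*}
D\bigl(((U_1\cap U_2)\times(V_1\cap V_2))\times((U_1\cap U_2)\times(V_1\cap V_2))\bigr).
\end{equation*}
This term is bounded above by \(C((U_1\cap U_2)\times(V_1\cap V_2))=C((U_1\times V_1)\cap(U_2\times V_2))\). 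To turn this bound into an equality, run the same complement-summation argument as in the preceding lemma. For a fixed rectangle \(R=U\times V\), sum \(D(R\times R)\) together with the terms \(D(R\times R')\) over rectangles \(R'\) disjoint from \(R\); the latter vanish, so \(D(R\times R)=D(R\times(X\times Y))=C(R)\). Equivalently, the diagonal mass on rectangles is forced to be full.

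Third, extend from rectangles to general measurable sets. Fix \(R_2\) a rectangle. Both \(W_1\mapsto D(W_1\times R_2)\) and \(W_1\mapsto C(W_1\cap R_2)\) are countably additive, evaluated statewise, and they agree on the semiring of rectangles. By Carath\'eodory uniqueness (applied pointwise in \(s\in S\) to finite measures), they agree on all of \(\Sigma_{X\times Y}\). Repeating the argument in the second slot gives agreement for all \(W_1\times W_2\), and hence \(D\) is the canonical joint measurement of \(C\).

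The main obstacle is the second step. One must make sure that the vanishing of cross terms really follows from pairs disjoint in only one coordinate. This works by monotonicity, since \(D((U_1\times V_1)\times(U_2\times V_2))\le D((U_1\times Y)\times(U_2\times Y))\). One must also check that rectangles suffice to pin down \(D\) on the product \(\sigma\)-algebra, which the uniqueness theorem handles since rectangles form a \(\pi\)-system.
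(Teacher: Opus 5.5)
Your proof is correct and follows essentially the same route as the paper. You push \(D\) forward to \(E\in\mathrm{JM}(A,A)\) and \(F\in\mathrm{JM}(B,B)\), use their canonical form to kill the off-diagonal pieces by monotonicity, pin down \(D\) on products of rectangles, and extend by Carath\'eodory/\(\pi\)-system uniqueness. The one cosmetic difference is how you get equality: you obtain \(D(R\times R)=C(R)\) directly from the marginal condition and the vanishing of \(D(R\times R^c)\), with \(R^c=(U^c\times Y)\sqcup(U\times V^c)\). The paper instead proves \(D(R_1\times R_2)\le C(R_1\cap R_2)\) and then sums over all complement choices to force equality term by term.
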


\begin{proof}
Suppose that \(A\in\mathcal M_S(X)\) and \(B\in\mathcal M_S(Y)\).
Let \(D\in\mathrm{JM}(C,C)\).
Define two measurements \(E\in\mathcal M_S(X\times X)\) and \(F\in\mathcal M_S(Y\times Y)\) by
\begin{equation*}
E(U_1\times U_2):=D(U_1\times Y\times U_2\times Y),\quad F(V_1\times V_2):=D(X\times V_1\times X\times V_2),
\end{equation*}
so that \(E\in\mathrm{JM}(A,A)\) and \(F\in\mathrm{JM}(B,B)\).
The intersubjectivity of \(A\) and \(B\) gives
\begin{equation*}
\forall U_1,U_2\in\Sigma_X,\quad E(U_1\times U_2)=A(U_1\cap U_2),\quad\text{and}\quad\forall V_1,V_2\in\Sigma_Y,\quad F(V_1\times V_2)=B(V_1\cap V_2).
\end{equation*}
It then follows that
\begin{equation*}
\begin{aligned}
D(U_1\times V_1\times U_2\times V_2)&\le D((U_1\cap U_2)\times(V_1\cap V_2)\times U_2\times V_2)+E((U_1\setminus U_2)\times U_2)+F((V_1\setminus V_2)\times V_2)\\
&=D((U_1\cap U_2)\times(V_1\cap V_2)\times U_2\times V_2)\\
&\le C((U_1\cap U_2)\times(V_1\cap V_2))\\
&=C((U_1\times V_1)\cap(U_2\times V_2)).
\end{aligned}
\end{equation*}
Summing this inequality over all choices of complements of \(U_1\), \(V_1\), \(U_2\), and \(V_2\), both sides add up to \(1_S\).
Hence, equality must hold term by term:
\begin{equation*}
\forall U_1,U_2\in\Sigma_X,\ \forall V_1,V_2\in\Sigma_Y,\quad D(U_1\times V_1\times U_2\times V_2)=C((U_1\times V_1)\cap(U_2\times V_2)).
\end{equation*}
Since the sets of the form \(U_1\times V_1\times U_2\times V_2\) generate the \(\sigma\)-algebra on \(X\times Y\times X\times Y\), Carath\'eodory's extension theorem ensures that \(D\) coincides with the canonical joint measurement of \(C\) with itself.
Therefore, \(C\) is intersubjective.
\end{proof}

\begin{prop}
Let \(C=\lambda A+(1-\lambda)B\) with \(\lambda\in (0,1]\) be a probabilistic mixture of two measurements \(A\) and \(B\).
If \(C\) is (completely) intersubjective, then \(A\) is also (completely) intersubjective.
\end{prop}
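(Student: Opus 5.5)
The plan is to transfer the finite-outcome argument of part (ii) of the earlier mixture proposition to the measure-theoretic setting. We use the continuous-outcome definition: intersubjective means that $\mathrm{JM}(\cdot,\cdot)$ with itself consists only of the canonical joint measurement. Suppose $C=\lambda A+(1-\lambda)B$ is intersubjective, and let $D\in\mathrm{JM}(A,A)$ be arbitrary. Fix the canonical joint measurement $E$ of $B$ with itself, $E(V_1\times V_2):=B(V_1\cap V_2)$, which exists on the product $\sigma$-algebra as the pushforward of $B$ under the diagonal map $x\mapsto(x,x)$.

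Then $\lambda D+(1-\lambda)E$ is an EVM on $X\times X$. Its marginals are $\lambda A+(1-\lambda)B=C$ on both sides, so it lies in $\mathrm{JM}(C,C)$. By intersubjectivity of $C$, for all $U_1,U_2\in\Sigma_X$,
\begin{equation*}
\lambda D(U_1\times U_2)+(1-\lambda)B(U_1\cap U_2)=\lambda A(U_1\cap U_2)+(1-\lambda)B(U_1\cap U_2).
\end{equation*}
Cancelling the $B$ terms and dividing by $\lambda>0$ gives $D(U_1\times U_2)=A(U_1\cap U_2)$ on all measurable rectangles. Rectangles form a $\pi$-system generating the product $\sigma$-algebra, and both sides are measures for each fixed state. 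Uniqueness of extension (the Carath\'eodory / $\pi$--$\lambda$ argument already used above) then shows that $D$ is the canonical joint measurement. Hence $A$ is intersubjective.

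For the complete version, let $\pi:X\to Y$ be measurable. Coarse-graining is linear, so $C\circ\pi^{-1}=\lambda(A\circ\pi^{-1})+(1-\lambda)(B\circ\pi^{-1})$. If $C$ is completely intersubjective, then $C\circ\pi^{-1}$ is intersubjective, and the argument above yields that $A\circ\pi^{-1}$ is intersubjective. Since $\pi$ was arbitrary, $A$ is completely intersubjective. Alternatively, one can use the preceding lemma: $C(U)=\lambda A(U)+(1-\lambda)B(U)$ is sharp for every $U$. If $e\le A(U)$ and $e\le A(X\setminus U)$, then $\lambda e\le C(U)$ and $\lambda e\le C(X\setminus U)$, so $e=0$, and $A(U)$ is sharp.

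The only step needing care is verifying that the mixture of the two joint measurements is a genuine EVM on $X\times X$ with the correct marginals. This means checking that the canonical $E$ is countably additive on the product $\sigma$-algebra, which holds because it is the pushforward of the countably additive $B$ under the measurable diagonal map. Everything else is linear bookkeeping.
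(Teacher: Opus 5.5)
Your proof is correct and follows essentially the same route as the paper: you mix a joint measurement in $\mathrm{JM}(A,A)$ with one in $\mathrm{JM}(B,B)$ to get an element of $\mathrm{JM}(C,C)$, invoke uniqueness there, cancel the $B$-part and divide by $\lambda$, and for complete intersubjectivity you apply the same argument to each coarse-graining. The differences are cosmetic: the paper compares two arbitrary $D_1,D_2\in\mathrm{JM}(A,A)$ rather than comparing $D$ with the canonical joint measurement, and you additionally justify that the canonical $E$ exists (the paper uses $\mathrm{JM}(B,B)\neq\emptyset$ implicitly) and give a correct alternative via the sharpness lemma.
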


\begin{proof}
We first prove the statement for intersubjectivity.
Let \(D_1,D_2\in\mathrm{JM}(A,A)\) and \(E\in\mathrm{JM}(B,B)\).
Then \(\lambda D_1+(1-\lambda)E\) and \(\lambda D_2+(1-\lambda)E\) both belong to \(\mathrm{JM}(C,C)\).
Since \(C\) is intersubjective, we must have \(\lambda D_1+(1-\lambda)E=\lambda D_2+(1-\lambda)E\), which implies \(D_1=D_2\).
Hence, \(A\) is intersubjective.

The statement for complete intersubjectivity follows by applying the same argument to each coarse-graining.
\end{proof}

To connect the intersubjectivity of measurements with the extremality, we first establish a useful characterization of extremal measurements in terms of the uniqueness of joint measurements.

\begin{lem}
\label{extremal measurements}
For a measurement \(A\in\mathcal M_S(X)\), the following three conditions are equivalent:
\begin{itemize}
\item[(i)] \(A\) is an extreme point of \(\mathcal M_S(X)\).
\item[(ii)] For every measurement \(B\), if a joint measurement of \(A\) and \(B\) exists, it is unique.
\item[(iii)] There exists a nontrivial coin-tossing measurement \(B\) such that the joint measurement of \(A\) and \(B\) is unique.
Here, a measurement \(B\in\mathcal M_S(Y)\) is called a coin-tossing measurement if there exists a probability measure \(\mu\) on \(Y\) such that \(B(V)=\mu(V)1_S\) for all \(V\in\Sigma_Y\).
It is called nontrivial if there exists \(V\in\Sigma_Y\) with \(0<\mu(V)<1\).
\end{itemize}
\end{lem}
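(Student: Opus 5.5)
I will prove the cycle (i)\(\Rightarrow\)(ii)\(\Rightarrow\)(iii)\(\Rightarrow\)(i). The middle implication is immediate: take \(B\) to be any nontrivial coin-tossing measurement, for instance \(Y=\{0,1\}\) with \(\mu(\{0\})=\mu(\{1\})=\frac12\). The product \(A\times B\), defined by \((A\times B)(U\times V):=\mu(V)A(U)\) and extended via Carath\'eodory, is a joint measurement, so \(\mathrm{JM}(A,B)\ne\emptyset\). Uniqueness then follows from (ii).

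For (i)\(\Rightarrow\)(ii), I will argue by contraposition. Suppose \(C_1\ne C_2\) both lie in \(\mathrm{JM}(A,B)\). Then there are measurable rectangles with \(C_1(U\times V)\ne C_2(U\times V)\). Set \(D:=C_1-C_2\), a signed effect-valued measure whose marginals vanish. The aim is to write \(A=\frac12(A_+ + A_-)\) with \(A_+\ne A_-\). The idea is to fix \(V\) with \(D(\cdot\times V)\ne0\) and put
\[
A_\pm(U):=C_1(U\times V)+C_2(U\times V^c)\pm\ldots
\]
More concretely, I will use
\[
A_1(U):=C_1(U\times V)+C_2(U\times V^c),\qquad A_2(U):=C_2(U\times V)+C_1(U\times V^c).
\]
Each of these is a countably additive effect-valued measure, since it is a sum of positive effect-valued measures. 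Each has total mass \(B(V)+B(V^c)=1_S\). Their average is \(\frac12(C_1+C_2)(U\times Y)=A(U)\). Finally, \(A_1(U)-A_2(U)=D(U\times V)-D(U\times V^c)=2D(U\times V)\), because \(D(U\times Y)=0\). Choosing \(U,V\) with \(D(U\times V)\ne0\) gives \(A_1\ne A_2\), so \(A\) is not extremal. This is the key construction.

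For (iii)\(\Rightarrow\)(i), I again argue by contraposition. Suppose \(A=\lambda A_1+(1-\lambda)A_2\) with \(A_1\ne A_2\) and \(\lambda\in(0,1)\); we may take \(\lambda=\frac12\) after re-splitting. Let \(B\) be any nontrivial coin-tossing measurement, and pick \(V_0\) with \(p:=\mu(V_0)\in(0,1)\). The product measure \(A\times B\) is one joint measurement. I will build a second one. Write \(A=pA'+(1-p)A''\), where \(A'\) and \(A''\) are suitable perturbations of \(A\) along the direction \(A_1-A_2\): set \(A'=A+t(1-p)(A_1-A_2)\) and \(A''=A-tp(A_1-A_2)\) for small \(t>0\). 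Then
\[
C(U\times V):=\mu(V\cap V_0)\,p^{-1}\,pA'(U)\ldots
\]
Stated cleanly, the definition is
\[
C(U\times V):=\frac{\mu(V\cap V_0)}{p}\,p\,A'(U)\cdot\frac{1}{1}\;\to\; C(U\times V):=\mu(V\cap V_0)A'(U)+\mu(V\cap V_0^c)A''(U).
\]
Its \(Y\)-marginal is \(\mu(V\cap V_0)1_S+\mu(V\cap V_0^c)1_S=\mu(V)1_S\), and its \(X\)-marginal is \(pA'+(1-p)A''=A\). It differs from \(A\times B\) because \(A'\ne A\).

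The main obstacle is ensuring that \(A'\) and \(A''\) are genuine EVMs, in particular positive. Positivity is not automatic for a perturbation \(A+t(A_1-A_2)\) when \(A_2\) is not dominated by a multiple of \(A\). The fix is to note that \(A\pm\frac12(A_1-A_2)\) equals \(A_1\) or \(A_2\), both of which are positive. So for \(0\le t\le\frac1{2\max(p,1-p)}\) each of \(A'\) and \(A''\) is a convex combination of \(A\) with \(A_1\) or \(A_2\). Hence both are valid EVMs: countably additive, with values in \(\mathcal E_S\), and of total mass \(1_S\). This contradicts the uniqueness assumed in (iii), which completes the cycle.
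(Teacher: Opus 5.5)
Your proof is correct and is essentially the paper's argument. For (i)$\Rightarrow$(ii) you use the same splitting $C_1(U\times V)+C_2(U\times V^c)$ versus $C_2(U\times V)+C_1(U\times V^c)$. For (iii)$\Rightarrow$(i) you use the same second joint measurement $\mu(V\cap V_0)A'(U)+\mu(V\setminus V_0)A''(U)$, and your one addition is to reweight an arbitrary nontrivial decomposition explicitly to one with weight $p=\mu(V_0)$, which the paper assumes without comment by starting from $A=\mu(V_0)C_1+(1-\mu(V_0))C_2$.

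Before writing it up, delete the abandoned intermediate formulas: the trailing $\pm\ldots$ in the definition of $A_\pm$ and the line ending in $\cdot\frac{1}{1}\to$. Only the final definitions are meaningful.
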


\begin{proof}
(i)\(\implies\)(ii).
Let \(B\in\mathcal M_S(Y)\), and suppose \(C_1,C_2\in\mathrm{JM}(A,B)\).
Fix \(V_0\in\Sigma_Y\) and define two measurements \(D_1,D_2\in\mathcal M_S(X)\) by
\begin{equation*}
\begin{aligned}
D_1(U)&:=C_1(U\times V_0)+C_2(U\times(Y\setminus V_0))=A(U)+C_1(U\times V_0)-C_2(U\times V_0),\\
D_2(U)&:=C_1(U\times(Y\setminus V_0))+C_2(U\times V_0)=A(U)-C_1(U\times V_0)+C_2(U\times V_0).
\end{aligned}
\end{equation*}
Then \(A=\frac12D_1+\frac12D_2\).
Since \(A\) is extremal in \(\mathcal M_S(X)\), we must have \(D_1=D_2\), implying that \(C_1(U\times V_0)=C_2(U\times V_0)\) for all \(U\in\Sigma_X\).
As this holds for arbitrary \(V_0\in\Sigma_Y\), we conclude \(C_1=C_2\).
Hence, the joint measurement is unique.

(ii)\(\implies\)(iii).
Immediate.

(iii)\(\implies\)(i).
Suppose that there exists a nontrivial coin-tossing measurement \(B\in\mathcal M_S(Y)\) with corresponding probability measure \(\mu\) (\textit{i.e.} \(B(V)=\mu(V)1_S\)) such that the joint measurement of \(A\) and \(B\) is unique.
Since \(B\) is nontrivial, there exists \(V_0\in\Sigma_Y\) such that \(0<\mu(V_0)<1\).
Suppose that \(A=\mu(V_0)C_1+(1-\mu(V_0))C_2\) for some \(C_1,C_2\in\mathcal M_S(X)\).
Define \(D\in\mathcal M_S(X\times Y)\) by
\begin{equation*}
D(U\times V):=\mu(V\cap V_0)C_1(U)+\mu(V\setminus V_0)C_2(U).
\end{equation*}
Then \(D\) is a joint measurement of \(A\) and \(B\).
By the uniqueness assumption, this joint measurement must coincide with the canonical one: \(D(U\times V)=A(U)B(V)=\mu(V)A(U)\).
In particular, evaluating \(D\) on \(U\times V_0\), we conclude that \(C_1(U)=A(U)\) for all \(U\in\Sigma_X\).
Hence, \(A\) is an extreme point of \(\mathcal M_S(X)\).
\end{proof}

\begin{rmk}
Lemma \ref{extremal measurements} extends the result of Ref.~\cite{Guerini2018} to continuous-outcome measurements.
The same argument also applies when \(A\) is taken to be a general channel rather than a measurement.
Moreover, condition (ii) remains valid when \(B\) is replaced by an arbitrary channel.
In this setting, Lemma \ref{extremal measurements} further shows that if \(B\) is an extremal channel, then the joint channel, when it exists, is extremal as well.
These results provide a characterization of extremal channels and generalize the results of Ref.~\cite{Haapasalo2014} to the framework of GPTs.
\end{rmk}

Applying condition (ii) of Lemma \ref{extremal measurements} to the case \(B:=A\) yields the following proposition:

\begin{prop}
Every extreme point of \(\mathcal M_S(X)\) is an intersubjective measurement.
\end{prop}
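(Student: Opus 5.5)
The plan is to read the statement as the special case \(B:=A\) of condition (ii) in Lemma~\ref{extremal measurements}, and then check that this matches the continuous-outcome definition of intersubjectivity given in this appendix.

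First, let \(A\in\mathrm{ext}(\mathcal M_S(X))\). By the implication (i)\(\implies\)(ii) of Lemma~\ref{extremal measurements}, for every measurement \(B\), the set \(\mathrm{JM}(A,B)\) has at most one element. Taking \(B=A\), we get that \(\mathrm{JM}(A,A)\) contains at most one element.

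Second, we show that \(\mathrm{JM}(A,A)\) is nonempty and identify its element. The natural candidate is the canonical joint measurement, i.e., the EVM \(C\) on \(X\times X\) with \(C(U_1\times U_2)=A(U_1\cap U_2)\). Concretely, \(C\) is the pushforward of \(A\) under the diagonal map \(\delta:X\to X\times X\), \(x\mapsto(x,x)\), that is, \(C(W):=A(\delta^{-1}(W))\) for \(W\) in the product \(\sigma\)-algebra. The map \(\delta\) is measurable, because \(\delta^{-1}(U_1\times U_2)=U_1\cap U_2\) and rectangles generate the product \(\sigma\)-algebra. So \(C\) is a well-defined EVM, since \(\sigma\)-additivity and normalization pass through preimages. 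Its marginals are \(C(U\times X)=A(U)=C(X\times U)\), so \(C\in\mathrm{JM}(A,A)\).

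Combining the two steps, \(\mathrm{JM}(A,A)=\{C\}\) is a singleton consisting of the canonical joint measurement. This is exactly the definition of intersubjectivity. In the finite-outcome setting, uniqueness forces \(b_{x,x'}=\delta_{x,x'}a_x\), so \(\sum_x b_{x,x}=1_S\). This recovers \(1\)-intersubjectivity in the sense of the main text, consistent with the claim attributed there to \cite{Guerini2018}.

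No step is a real obstacle. The only point needing care is that the existence of the canonical joint measurement requires measurability of the diagonal map in the general measurable-space setting. This holds by the rectangle argument above, so no Carathéodory extension is needed: we define \(C\) directly as a pushforward rather than extending it from rectangles.
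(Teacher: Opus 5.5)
Your proposal is correct and is essentially the paper's argument: the paper obtains the proposition exactly by specializing condition (ii) of Lemma~\ref{extremal measurements} to \(B:=A\). Your explicit check that the canonical joint measurement exists, as the pushforward of \(A\) under the diagonal map, settles a point the paper leaves implicit in its definition of intersubjectivity, and you handle it correctly.
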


\end{document}